\newtheorem{theorem}{Theorem}
\newtheorem{assumption}[theorem]{Assumption}
\newtheorem{lemma}[theorem]{Lemma}
\newcommand{\cs}{{\cal{S}}}
\newcommand{\Alpha}{\mbox{\textsc{Asym}}}
\newcommand{\Beta}{\mbox{\textsc{Abs}}}
\newcommand{\Xomit}[1]{ }
\newenvironment{proof}[1][Proof]{\textbf{#1.} }{\ \rule{0.5em}{0.5em}}
\mathchardef\mhyphen="2D
\newcommand{\eps}{\upvarepsilon}
\newcommand{\mmu}{\upmu}
\begin{document}
\begin{titlepage}

\title{Truly asymptotic lower bounds for online vector bin packing}

\date{}

\author{J\'anos Balogh\thanks{Institute of Informatics,
     University of Szeged, Szeged, Hungary. \texttt{baloghj@inf.u-szeged.hu}. }   \and Leah
Epstein\thanks{ Department of Mathematics, University of Haifa,
Haifa, Israel. \texttt{lea@math.haifa.ac.il}. } \and Asaf
Levin\thanks{Faculty of Industrial Engineering and Management, The
Technion, Haifa, Israel. \texttt{levinas@ie.technion.ac.il.}}}

%\vspace{-0.5cm}

\maketitle

\thispagestyle{empty}

\begin{abstract}
In this work, we consider online vector bin packing.  It is known that no algorithm can have a competitive ratio of $o(d/\log^2 d)$ in the absolute sense, though upper bounds for this problem were always shown in the asymptotic sense. Since variants of bin packing are traditionally studied with respect to the asymptotic measure and since the two measures are different, we focus on the asymptotic measure and prove new lower bounds on the asymptotic competitive ratio. The existing lower bounds prior to this work were much smaller than $3$ even for very large dimensions.

We significantly improve the best
known lower bounds on the asymptotic competitive ratio (and as a byproduct, on the absolute competitive ratio) for online vector packing of vectors with $d \geq 3$ dimensions, for every such dimension $d$.  To obtain these results, we use several different constructions, one of which is an adaptive construction showing a lower bound of $\Omega(\sqrt{d})$.
Our main result is that the lower bound of $\Omega(d/\log^2 d)$ on the competitive ratio holds also in the asymptotic sense. The last result requires a careful adaptation of constructions for online coloring rather than simple black-box reductions.
\end{abstract}
\end{titlepage}

\section{Introduction}
We study the vector packing problem (VP) \cite{GareyGJ76,GaKeWo94,ACKS13,ACFR16,AzarCR16}. In
VP with dimension $d \geq 2$, a set of items is given, where every item is
a non-zero $d$-dimensional vector whose components are rational
numbers in $[0,1]$. This set  is to be split into subsets called bins, such that the vector sum of every subset does
not exceed $1$ in any component. This constraint is called the capacity constraint. Cardinality Constrained Bin Packing (CCBP) \cite{BCKK04,BBDEL_CCBP} is a special case of VP with dimension $2$, defined as follows. Seeing CCBP as a one-dimensional packing problem, there is an integer parameter $k \geq 2$  such that in addition to the capacity constraint, no bin can have more than $k$ items. In an equivalent input
for VP with $d\geq2$,  the second
component of every item is $\frac 1k$, while the first component is in $[0,1]$ (the other components are zeroes).

In this paper we consider lower bounds on the worst-case performance guarantees of online algorithms for VP.
Online algorithms  receive input
items one by one, and pack each new item irrevocably before the
next item is presented, into an empty (new) bin or non-empty bin.
Such algorithms receive an input as a sequence, while offline
algorithms receive an input as a set. An arbitrary
optimal offline
algorithm, which uses the minimum number of bins for packing the items
of the input or instance,
is denoted by $OPT$. For an input $L$ and algorithm $A$, we
let $A(L)$ denote the number of bins that $A$ uses to pack $L$ also called the {\em cost of algorithm $A$ on input $L$}. We
also let $OPT(L)$ denote the number of bins that $OPT$ uses for
a given input $L$.  The absolute competitive ratio of an
algorithm $A$ is defined as the supremum ratio over the following
ratios. These are the ratios for all inputs $L$, where the ratio
for $L$ is between the number of its bins $A(L)$ and the number of
the bins of $OPT$, $OPT(L)$. The asymptotic competitive ratio is
the limit of absolute competitive
ratios $R_K$ when $K$ tends to
infinity, and $R_K$ takes into account only inputs for which $OPT$
uses at least $K$ bins. That is, the asymptotic competitive ratio
of $A$ is
$$\lim_{K \rightarrow \infty} \   \sup_{OPT(L)\geq K}
\  \frac{A(L)}{OPT(L)} \ . $$  In this paper we mostly deal with the
asymptotic competitive ratio, which is the most natural measure for bin packing algorithms, and we sometimes refer to it by the term
competitive ratio. When we discuss the absolute competitive ratio,
we use this last term explicitly.

Note that VP is defined as a distinct optimization problem for every fixed value of $d$.
For each such
value (which is a dimension), there might be an online algorithm that is the best possible with respect to the absolute competitive ratio (the one whose absolute competitive ratio is minimized), and there might be an online algorithm that is the best possible with respect to the asymptotic competitive ratio. Such algorithms may be different.

Denote by $\Beta(d)$ the best possible absolute competitive ratio of an online algorithm for VP with dimension $d$, and let $\Alpha(d)$ denote  the best possible asymptotic competitive ratio of an online algorithm for VP with dimension $d$. If there are values of $d$ for which such best possible online algorithms do not exist, we define the corresponding values of $\Beta(d)$ and $\Alpha(d)$ as the infimums of the corresponding ratios where the infimum is taken over all online algorithms for VP with $d$ dimensions.
Thus these values are well-defined for all $d$ even though we currently do not know their values. It is known that these values are at most linear in $d$ (see below).

Observe that both $\Beta(d)$ and $\Alpha(d)$ are monotone non-decreasing functions of $d$ as we can use the online algorithm with larger dimension in order to pack the lower dimension vectors by simulating a higher dimension input using additional components which are always set to $0$.
Furthermore, by the definitions of the asymptotic competitive ratio and the absolute competitive ratio, we conclude that for every $d$, we have $\Alpha(d)\leq \Beta(d)$.

In this work, we improve the known lower bounds on the asymptotic competitive ratio for all fixed values of $d$ that are at least $3$,
for the online VP scenario. This improves the known results for the absolute ratios as well.
That is, we improve upon the state of the art of lower bounding
$\Alpha(d)$ for all $d\geq 3$. Our main result is that the order of growth of the asymptotic competitive ratio is $\Omega( \frac{d}{(\log d)^2})$. Recall that the term asymptotic here is not the asymptotic growth of $d$, but the asymptotic definition of the asymptotic competitive ratio, which is the most common measure for bin packing problems.
To do that, we present four constructions leading to different lower bounds on $\Alpha(d)$.  For every specific value of $d$, one may use the construction that lead to the largest possible lower bound.  In Section \ref{sec:large-values}, we present our results for very large values of $d$.  That is, in that section, we show that for $d > 16$, there is a lower bound of $\frac{d-1}{8(\log_2 d)^3}$ on $\Alpha(d)$, and for sufficiently large and fixed  values of $d$, the lower bound proved for the asymptotic competitive ratio for online VP is $ \frac{d}{2^{11}\cdot (\log_2 d)^2}$.
The statement regarding values of $d$ that are sufficiently large is explained in that section as well.  In Section \ref{sec:medd} we show a lower bound of  $\frac{\lfloor\sqrt{d-1} \rfloor +2}{2}$ that applies for all fixed values of $d \geq 2$, and the same type of construction yields better results for relatively small values of $d$. For example, in the case $d=14$ we get a lower bound of $3$ on the asymptotic competitive ratio.
 Then, in Section  \ref{sec:d=3} we show a lower bound of $\frac{9}{4}=2.25$ on $\Alpha(3)$.
Finally, in Section \ref{sec:d=8}, we show a lower bound of $\frac{76}{29}\approx 2.62$ on $\Alpha(8)$.

The last three constructions, of Sections  \ref{sec:medd},
\ref{sec:d=3}, and \ref{sec:d=8},
are based on using a technique called adaptive constructions for packing problems that we will explain in detail below.  We note that our lower bound of  $\frac{\lfloor\sqrt{d-1} \rfloor +2}{2}$ is the largest lower bound that we prove for a big variety  of values of $d$. This holds even for extremely large values of $d$ like $d=2^{30}$, for which the value of this lower bound is at least $2^{14}> 16000$.
Comparing this result to the values of lower bounds of Section
\ref{sec:large-values}, we find that they are much smaller, and in fact the values resulting from the constructions of that section are not larger than $4972$ and $583$, respectively.  Thus, for every reasonable value of $d$, the lower bounds that we obtain based on the adaptive construction method are much better than the ones of Section \ref{sec:large-values}.

The lower bounds on $\Beta(d)$ established in Azar et al. \cite{ACKS13} were not computed explicitly
in the sense that they
are only stated as $\Omega (d^{1-\eps})$ for every $\eps>0$.  These lower bounds hold only in the absolute sense, and their order of growth is $\Omega\left(\frac{d}{(\log_2 d)^2}\right)$.
In order to compare their bounds to our lower bounds on $\Alpha(d)$, we examined their proofs.  Their lower bounds are basically the deterministic lower bounds on node coloring of graphs where the nodes arrive in an online fashion, thus they are using black-box reductions from the (deterministic) lower bounds of Halld{\'{o}}rsson and Szegedy \cite{HS}.  These black-box reductions are the main reason that they are not able to find lower bounds the asymptotic competitive ratio and only the
absolute competitive ratio which may be larger (and it is not the standard tool to analyze bin packing problems).
Note that we improve the known absolute lower bound for a large number of values of $d$. For example, the deterministic lower bound resulting from the construction of \cite{HS} for $d=60$ is $3.75$ while our results imply a lower bound above $4.8$ (which holds even in asymptotic case and for a slightly lower dimension of $57$).

In the work of \cite{HS}, the part of randomized lower bounds consists of two constructions.  In the first one, the value of $d$ is relatively large so we can use Stirling's formula to approximate $\lceil \log_2 d \rceil !$ with a constant multiplicative error, while the second one holds for all dimensions $d$.  In Section \ref{sec:large-values}, we use the ideas of Halld{\'{o}}rsson and Szegedy \cite{HS} to get similar lower bounds on the asymptotic competitive ratio of VP.  Unlike the work of \cite{ACKS13}, we do not apply black-box reduction from node coloring.  The main motivation for that is that we need a lower bound with respect to a different coloring problem in which we are interested in {\em fractional coloring}.  Since this variant was not studied before, we cannot use an existing result for that, and we cannot use a black-box reduction from it.  Instead, we present the lower bound construction for VP.  The transition to this fractional coloring is the main tool that we use in order to lift the construction of \cite{ACKS13} so that it will allow us to prove a lower bound on the asymptotic competitive ratio for VP.

%%Based on our analysis the transition from the bounds of \cite{ACKS13} to our bounds of Section \ref{sec:large-values} results in an decrease
%%%{increase?} of the multiplicative factor hidden in the $\Omega$ notation by a multiplicative factor of $2$.
%Thus, in the example above where $d=2^{30}$ the lower bounds proved by \cite{ACKS13} on $\Beta(2^{30})$ are smaller than $2\cdot 4972=9944$ that is smaller than the lower bound we prove on $\Alpha(2^{30})$.  Thus, in particular in addition to the major improvement on the lower bounds on $\Alpha(d)$ for all $d\geq 3$ we also improve the
%{constant?}
%lower bounds on the absolute competitive ratio for VP in $d$ dimensions for all reasonable values of $d$ where reasonable here means less than one billion.

Next, we consider the literature of bounds of $\Alpha(d)$.  In \cite{AzarCR16} it is shown that when $d$ grows to infinity there is a lower bound that tends to $e$, thus $\lim_{d\rightarrow \infty} \Alpha(d)\geq e$.  Observe that with respect to this bound we replace $e$ by $\infty$.  The interesting part is that this lower bound of \cite{AzarCR16} applies even if all components of all vectors are small.  That is, for every small $\eps>0$ such that all components are smaller than $\eps$, the lower bound of \cite{AzarCR16} holds, and it holds even when $\eps$ tends to $0$.  This matches the upper bound of $e$ for this very specific case of VP (of small components) established by \cite{ACFR16}.  In \cite{BBDEL_CCBP}, the authors consider the case of $d=2$ and show its difference from CCBP.  With respect to VP in two dimensions they showed that $\Alpha(2)\geq 2.03731129$, and thus for all $d\geq 2$ it was known that $\Alpha(d)\geq 2.03731129$.
Prior to the pair of papers \cite{AzarCR16,BBDEL_CCBP}, the lower bounds on $\Alpha(d)$ were weaker \cite{GaKeWo94,BlVlWo96,Blitz}.
As for upper bounds on $\Alpha(d)$, the best known result \cite{GareyGJ76} is (still) that the First-Fit algorithm has an asymptotic competitive ratio of $d+0.7$ for VP in $d$ dimensions. Prior to that work, there was a slightly weaker bound of $d+1$ by Kou and Markowsky \cite{KouM77}. For the absolute competitive ratio, the resulting bound is also $d+O(1)$ \cite{GareyGJ76, KouM77} (an upper bound of $O(d)$ follows from the simple property that for greedy algorithms, no two bins of the output have a sum of at most $1$ in all components).

We stress the interesting fact that prior to our work it was unknown whether there is an online algorithm $A$ for VP for any dimension $d$ (or for sufficiently large dimensions)  such that for every input $L$, its cost $A(L)$ satisfies $A(L)\leq 3\cdot OPT(L) + 2\cdot d$ (where $d$ is the dimension of the input). The reason for this is that the known lower bound of $e$ holds in fact for any additive term, while the non-constant known lower bound uses just $d$ items for dimension $d$.

We discuss the relation of VP to other bin packing problems. Online CCBP, which is a special case of VP in two dimensions as explained above, is now fully understood. There are matching bounds of $2$ on the asymptotic competitive ratio as well as the absolute competitive ratio (see  \cite{BCKK04,BDE,BBDEL_CCBP}).  As mentioned above, it was already established in the past \cite{BBDEL_CCBP} that VP in two dimensions is strictly harder than CCBP since it is shown that $\Alpha(2)\geq 2.03731129$.
The special case of $d=1$ of VP is simply BP and it was studied for half a century \cite{J74,JDUGG74}.  The current best bounds on $\Alpha(1)$ (for the online version)
are a lower bound of $1.54278$ \cite{BBDEL_lb} and an upper bound of $1.57829$ \cite{BBDEL_ub}, while $\Beta(1)=\frac 53$ \cite{BBDSR}.

There is also vast literature on the offline versions of all the problems mentioned above \cite{CheKha04,BCS09,BEK16,CKP03,EL07afptas,FVL81,KK82}. In particular, similar lower bounds to those of \cite{ACFR16} on the absolute approximation ratio for VP (under a certain standard complexity assumption, that NP$\neq$ZPP) were observed for the offline scenario \cite{CheKha04,BEK16}, also by reductions from coloring problems. The lower bounds are for the absolute measures since for every dimension $d$, the number of items is simply $d$.
%Our methods can also be used to establish hardness of approximation results for the offline variants with respect to the asymptotic measure, while such hardness results were proved only in the absolute sense. This results in asymptotic inapproximability of $\Omega(\frac{d}{\log^2 d})$ unless NP=ZPP.

The other bin packing problems discussed here (BP and CCBP) admit polynomial time asymptotic approximation schemes \cite{CKP03,EL07afptas,FVL81,KK82}, but one can show that unless P=NP, there cannot be such schemes for the absolute cases, and the absolute approximation ratio is at least $1.5$. This can serve as evidence that lower bounds for the absolute approximation ratio or the absolute competitive ratio are not sufficient for the analysis of the asymptotic approximation ratio or the asymptotic competitive ratio. In fact, it holds also for online bin packing problems that the absolute measure is different from the asymptotic one. For example, for BP, we already mentioned that the best possible absolute competitive ratio is $\frac 53$ \cite{BBDSR}, while the asymptotic competitive ratio is much smaller \cite{BBDEL_ub}. For CCBP, one example is the parameter $k=4$, for which the best possible absolute competitive ratio is $2$ \cite{BCKK04,BBDEL_CCBP} while an improved asymptotic competitive ratio below $2$ is known \cite{Epstein05}.

\section{Preliminaries: adaptive constructions}
In some of the constructions, we define inputs using a method
presented in the past \cite{BBDEL_CCBP}. In that method, a binary
condition on the assignment of every item is defined, and it is
used by the adversary (who presents the input) in the definition of the properties of the
following item. More precisely, the algorithm keeps an active
interval of (scalar) values (contained in $(0,1)$), and it modifies the
interval after the assignment of every
input item by the algorithm.
These values are not necessarily the actual sizes of
input items even in
the one-dimensional case, though item sizes are based on them in a
simple pre-specified way.
This means that the generated value is not necessarily the size of the new input item, nor will it always be equal to a component of its vector,
but this generated value is used in the definition of the new item, for example, it may be subtracted from some fixed value or added to a fixed size.

The number of required items is decided in advance, or (in some cases) an upper bound on the number of required items is given in advance in cases where the exact number of required items is revealed later on.  This number of required items is used to decide upon the
initial interval of the values too.
The initial interval is also based on
the required sizes and properties of the values. The initial
interval is always defined such that the smallest size is strictly positive
and the largest size is sufficiently small.

Input items are presented one by one. After the assignment of an item
by the algorithm,
the validity of the condition is tested for this item.  During the
process of input construction, it is ensured (via a process
resembling binary search or geometric binary search) that values
corresponding to items satisfying the binary condition are larger
by a pre-specified (constant)
multiplicative factor than the value of any item not
satisfying the binary condition.
In this way the process determines two regions as explained in what follows.
We will call the resulting ranges
of values {\em large} and {\em small}, respectively, where the two ranges are disjoint.
Items with large values, i.e., from the large range,
are called large, and items with small values, i.e.,  from the small range,
are called small.  Note that this definition of large and small does not indicate that the size of a large item is larger than the size of a small item (since sizes maybe be based for example on subtraction of the values).

Note that when an item is presented, its size is defined without
any knowledge of the assignment, so it is still unknown at that moment
if the
binary condition holds for this item. Thus, its value is defined
without the knowledge regarding whether it is small or large. This
knowledge is gained based on the action of the algorithm once the
item is packed. Based on the packing, if its value is required to
be large, future values will be much smaller, and if its value is
required to be small, future values will be much larger.

The construction allows us to define positive values smaller than
a given value $\eps>0$, such that for a pre-defined
(constant)
multiplicative factor $k$,
any large value is more than $k$ times larger than any small
value. Thus, there is a value $\gamma<\eps$ such that every small
value is smaller than $\frac{\gamma}k$ and every large value is
larger than $\gamma$. If items are one dimensional and their sizes
are simply these values, this means for example that an item of
size $1-\gamma$ can be packed with $k$ small items, but cannot be
packed with one large item
into the same bin.
Note that in this case large items are
also quite small, though not as small as small items. It is
possible to define items differently in one dimension, and not only in the way that their sizes are equal to the values. One option
is to use the values as complements of sizes (to $1$). Another
option is to use an additive term, for example, items can have
sizes of $\frac 13$ plus the defined value. In this case, one can
define a value of $\eps$ such that $\frac 13+\eps<\frac 12$, for
example. For items that are vectors, one can define a part of the
components to be defined based on the corresponding value. For
example, it is possible in the case $d=5$ that two components will
be equal to the value while three other components are equal to
zero.

\section{The lower bound for large values of $\boldsymbol{d}$\label{sec:large-values}}
In this section we consider the cases where the dimension is very large.
Our lower bounds are in fact not smaller than the weaker results of Azar et al. \cite{ACKS13}  which were proved only for the absolute competitive ratio measure (as a function of the dimension for large enough values of $d$), while here we consider the stronger measure of asymptotic competitive ratio.

In order to consider the asymptotic competitive measure, we have an integer parameter $N$, and both the upper bound on the cost of
an optimal offline solution as well as the lower bound on the cost of the
online solution constructed by an algorithm are linear functions of $N$.  Then, by letting $N$ grow to infinity, the lower bound on the asymptotic competitive ratio follows.

We have exactly $d$ phases in total,
where the $j$th phase consists of $N$ identical items whose size vector has $1$ as their $j$th component, all components of indexes larger than $j$ are $0$ while the components of indexes smaller than $j$ are either $0$ or $\eps$ where $0<\eps\leq \frac{1}{Nd}$.
The number of items will be $N\cdot d$, so components of values no larger than $\eps$ (zero or $\eps$) will not prevent the packing of items into the same bin. The idea of such components was presented in \cite{ACKS13}.

Note that by definition, two items of the same phase cannot fit into a common bin, and furthermore, a collection of items fit into a bin if and only if no item in the collection has an $\eps$ component if another item of the collection has $1$ in the same component.  To complete the description of the input sequence, we still need to describe the specific rules for defining components of indexes smaller than $j$ (if they are $0$ or $\eps$).
Furthermore, a construction of this form satisfies that any solution has cost of at least $N$ and of at most $Nd$ regardless of its specific details.

The definition of the item sizes depends on a fixed
offline solution that is maintained after each phase.   Our offline solution is not necessarily optimal, though we use its cost as an upper bound on the optimal cost.
The offline solution has an integer parameter $\nu$ (which will be chosen later as a function of $d$), and maintains $\nu$ classes of bins each of which has $N$ bins that we refer to as the bins of the class. One can think of the numbers $1,2,\ldots,\nu$ as colors, though there are multiple instances of bins for every color.

We say that the offline solution packs the items of the $j$th phase into class $i$ and mean that the $\ell$th item of the phase is packed into the $\ell$th bin of the class (for all $\ell=1,2,\ldots ,N$).
Using this rule means that in order to guarantee feasibility of this packing we can consider only the first bin of each class (simply because the items of each phase are the same and the packing of each bin of the class is equivalent up to indexes of equal sized items).

For every phase $j$, there is a unique class $\ell$ of our offline solution such that the items of phase $j$ are packed into bins of class $\ell$.
Thus, the assignment of phases to classes is a surjective function.
In this case, we will say that class $\ell$ is the class of phase $j$. On the other hand, every class can be the class of several phases, and for a class $\ell$, we will say that these phases are the phases of class $\ell$. Note that the phases of a certain class are defined gradually, they are an empty set initially, and for every new phase $j$ whose class is defined to be $\ell$, the set of phases of $\ell$ is extended by $j$.

Next, consider a bin $B$ of the online algorithm (at some point during the lower bound construction).  We associate with every such bin the subset $S(B)$ of classes (of the offline solution) that contain items that are packed into $B$. Thus, it will always be the case that $|S(B)|\leq \nu$.
Note that the definition is with respect to classes of the offline solution, where a class may be relevant for one phase or multiple phases.
The set $S(B)$ may be extended
later, and it is called the associated set of $B$.
Here, we are interested in the existence of at least one item of the class in this bin
(an item of a phase that belongs to this class),
and we do not distinguish between the cases of one such item or more than that.
That is, when $\ell \in S(B)$, this means that there exists at least one phase $j$ such that $\ell$ is the class of phase $j$
in our offline solution, and an item of phase $j$ is packed into $B$. Note that it is possible that for two bins $B_1, B_2$
such that $\ell \in S(B_1), S(B_2)$, the values of $j$ (where $\ell$ is the class of
 phase $j$) are distinct.

Let $[\nu]=\{ 1,2,\ldots ,\nu\}$ be the set of classes of the offline solution, and let $2^{[\nu]}$ be its power set (i.e., the set of all its subsets), and let $X=2^{[\nu]}\setminus \{ \emptyset \}$.
The goal is to capture the bin types of the online algorithm with respect to classes of the offline solution.
We will need the next assumption, which will hold for two choices of pairs of $\alpha$ and $\beta$ used in the proof.

\begin{assumption}\label{asm1}
There exists a set of subsets $\cs \subseteq X$ such that $|\cs| \geq \alpha$ and for every pair $S,S'\in \cs$ such that $S\neq S'$ we have that their symmetric difference $S \triangle S'$ satisfies $|S\triangle S'| \geq \beta$.
\end{assumption}

Next, we provide specific pairs of $\alpha$ and $\beta$ that satisfy the assumption.

\begin{lemma}
Assumption \ref{asm1} holds for $(\alpha,\beta)=(2^{\nu}-1,1)$ and if $\nu$ is sufficiently large then Assumption \ref{asm1} also holds for $(\alpha,\beta)=(2^{\nu/4}, 0.3 \nu)$.
\end{lemma}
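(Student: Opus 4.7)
The plan is to handle the two pairs $(\alpha, \beta)$ separately, where the second case is the main one.

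For $(\alpha, \beta) = (2^\nu - 1, 1)$, I would take $\cs = X = 2^{[\nu]} \setminus \{\emptyset\}$. Then $|\cs| = 2^\nu - 1 = \alpha$, and any two distinct nonempty subsets of $[\nu]$ differ in at least one element, so $|S \triangle S'| \geq 1 = \beta$. This case is immediate.

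For $(\alpha, \beta) = (2^{\nu/4}, 0.3 \nu)$ with $\nu$ sufficiently large, I would use a Gilbert--Varshamov-style probabilistic construction. Identify each subset $S \subseteq [\nu]$ with its binary indicator vector in $\{0,1\}^\nu$, so that $|S \triangle S'|$ is the Hamming distance between the indicators. Sample $M$ subsets independently and uniformly at random, where $M$ is a small constant multiple of $2^{\nu/4}$. For any two distinct samples, $|S \triangle S'| \sim \mathrm{Bin}(\nu, 1/2)$, and a Chernoff-type tail bound (equivalently, an estimate of the Hamming ball volume) yields $\Pr[|S \triangle S'| < 0.3\nu] \leq 2^{-c\nu + o(\nu)}$ for an appropriate constant $c > 0$. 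By linearity of expectation, the expected number of ``bad'' pairs is at most $\binom{M}{2} \cdot 2^{-c\nu + o(\nu)}$. An alteration step (removing one subset from each bad pair, and discarding the empty set if it was sampled) then leaves a subfamily $\cs$ of size at least $2^{\nu/4}$, which by construction has pairwise symmetric differences of size at least $0.3\nu$, verifying both conditions of Assumption \ref{asm1}.

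The main obstacle I expect is ensuring that the exponential rate $c$ is large enough for the alteration step to preserve enough subsets: the expected number of bad pairs scales like $2^{(1/2 - c)\nu}$, and this must be dominated by $M \approx 2^{\nu/4}$, so one needs $c$ comfortably above $1/4$. A naive Chernoff bound with the generic rate $c = 1 - H(0.3) \approx 0.119$ is not obviously sufficient, so I would expect the proof to use a sharper tool --- for example, restricting to balanced samples (of weight exactly $\nu/2$) so that one obtains tighter hypergeometric tail estimates, passing to a random linear code to exploit linear-algebraic structure, or appealing to an explicit construction such as a concatenated code. Once a family of the required size and minimum pairwise Hamming distance is produced, the conditions of Assumption \ref{asm1} are immediate.
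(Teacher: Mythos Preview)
Your first case is identical to the paper's. For the second case, the paper does not give a self-contained argument: it simply cites Halld{\'o}rsson and Szegedy \cite{HS}, summarising their construction as sampling $2^{\nu/4}$ subsets of $[\nu]$, each of size exactly $\nu/2$, independently at random, and showing that with positive probability every pair has symmetric difference at least $0.3\nu$. This is precisely the ``restrict to balanced samples'' refinement you anticipate, so your overall approach matches the paper's. The paper does not engage with the quantitative concern you raise --- that the naive rate $1-H(0.3)\approx 0.119$ appears to fall short of the $1/4$ needed for the union bound or alteration to succeed --- and indeed the hypergeometric tail for balanced subsets has the same exponential rate, so this worry is not resolved merely by balancing; the paper defers all such details to \cite{HS}. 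In short, your proposal is at least as detailed as the paper's own proof, and your caveat about the constants is well placed.
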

\begin{proof}
For the first part, consider $\cs$ being the set $X$, then it has $2^{\nu}-1$ elements as required, and each pair of distinct elements represents non-equal subsets of $[\nu]$ so they differ by at least one element.
We fix the value of $\beta$ to 1 in this case.

The second part was proven by \cite{HS} who showed that if we pick a random sub-collection of subsets of $[\nu]$ with $2^{\nu/4}$ subsets each of which consisting of exactly $\nu/2$ elements of $[\nu]$ (chosen independently at random), then with some positive probability (for large enough value of $\nu$) each pair of these selected subsets satisfies the condition on their symmetric difference.  Using the probabilistic method, they were able to prove our claim (deterministically) for large enough values of $\nu$ (that they have not specified).
\end{proof}

Let $\alpha,\beta$ be a pair of positive parameters. Let $\cs$ be fixed.
Next, we define a subset in $\cs$ that {\em represents} a bin $B$ of the algorithm.  We say that $S\in \cs$ represents bin $B$ of the algorithm (and that $B$ is represented by $S$) if $|S(B)\triangle S| \leq \frac{\beta}{5}$.

\begin{lemma}\label{injection}
Every bin
of the algorithm is represented by at most one set $S\in \cs$.
\end{lemma}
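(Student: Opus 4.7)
The plan is to argue by contradiction using the triangle inequality for the symmetric difference. Suppose a bin $B$ of the online algorithm were represented by two distinct sets $S_1, S_2 \in \cs$. By the definition of ``represents,'' this would mean
\[
|S(B) \triangle S_1| \leq \frac{\beta}{5} \quad \text{and} \quad |S(B) \triangle S_2| \leq \frac{\beta}{5}.
\]

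The key step is to combine these two inequalities via the triangle inequality for symmetric difference, namely $|S_1 \triangle S_2| \leq |S_1 \triangle S(B)| + |S(B) \triangle S_2|$ (which follows immediately from the fact that any element of $S_1 \triangle S_2$ must lie in exactly one of $S_1$, $S_2$ and so either belongs to or fails to belong to $S(B)$ in a way that forces it into $S_1 \triangle S(B)$ or $S(B) \triangle S_2$). This yields $|S_1 \triangle S_2| \leq \frac{2\beta}{5}$.

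On the other hand, since $S_1 \neq S_2$ are both in $\cs$, Assumption \ref{asm1} directly gives $|S_1 \triangle S_2| \geq \beta$. Combining these two bounds produces $\beta \leq \frac{2\beta}{5}$, which is false for any $\beta > 0$. This contradiction will complete the proof.

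The statement is essentially a packing/separation argument of the sort standard in coding theory (covering balls of radius $\beta/5$ around the codewords in $\cs$ are disjoint because the minimum distance is $\beta$), so the only ``obstacle'' is confirming the triangle inequality for symmetric difference, which is routine and does not require unpacking the combinatorics of $X$ or of our particular $\alpha$, $\beta$ pairs. Consequently the proof will be very short and uses no property of $\cs$ beyond the lower bound on pairwise symmetric differences guaranteed by Assumption~\ref{asm1}.
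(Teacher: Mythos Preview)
Your proof is correct and follows essentially the same approach as the paper: assume two distinct representatives and derive a contradiction with Assumption~\ref{asm1}. The only difference is that you invoke the triangle inequality for symmetric difference directly to get $|S_1\triangle S_2|\le 2\beta/5$, whereas the paper splits $S_1\setminus S_2$ and $S_2\setminus S_1$ into parts inside and outside $S(B)$ to obtain the weaker (but still sufficient) bound $|S_1\triangle S_2|\le 4\beta/5$; your route is slightly cleaner but not materially different.
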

\begin{proof}
Assume by contradiction that a bin $B$ is represented by two sets   $S_1,S_2\in \cs$.

Recall that $|S_1\triangle S_2| \geq \beta$ (by
Assumption \ref{asm1}), but  $|S(B)\triangle S_i| \leq \frac{\beta}{5}$ for $i=1,2$
(by the definition of representation).
First, consider the elements of $S_1\setminus S_2$.  Some of those elements belong to $S(B)$ while other do not.  Observe that $(S_1\setminus S_2)\cap S(B) \subseteq S(B)\triangle S_2$ so $|(S_1\setminus S_2)\cap S(B)| \leq \frac{\beta}{5}$. Since $(S_1\setminus S_2)\setminus S(B) \subseteq S(B)\triangle S_1$, we conclude that $|(S_1\setminus S_2)\setminus S(B)| \leq \frac{\beta}{5}$.  Therefore, $|S_1\setminus S_2| \leq \frac{2\beta}{5}$.    Similarly (by changing the roles of $S_1$ and $S_2$) we conlcude that $|S_2\setminus S_1| \leq \frac{2\beta}{5}$.  Thus, $|S_1 \triangle S_2| \leq \frac{4\beta}{5}$ contradicting our assumption on $\cs$.
\end{proof}

Note that if $\alpha=2^{\nu}-1$, then every non-empty bin $B$ is simply represented by $S(B)$,
but if $\beta\geq 5$,
then there might be bins that are not represented at all
by a set of $\cs$, while by Lemma \ref{injection}, if $B$ is represented by a set in $\cs$ it is not represented by other members of $\cs$.

For a set $S\in \cs$, we denote by $n(S)$ the number of bins (of the online algorithm) that it represents.  This value is initialized as $0$ and remains non-negative in all times.  Next, we define the vectors of a phase whose index is $j$.

Assume that after the previous phase, we have computed the values $n(S)$ for all $S\in \cs$.  If $\sum_{S\in \cs} n(S) \geq \alpha \cdot \frac{N}{2}$
after phase $j-1$ (this cannot happen for $j=1$),
we stop the construction. Since every bin of the algorithm is represented by at most one set, we conclude that the algorithm has at least $\frac{\alpha N}{2}$ bins while the offline solution has
at most $\nu\cdot N$ bins, and we get a lower bound of $\frac{\alpha}{2\nu}$.
Otherwise, we pick a set $S_j \in \cs$ such that $n(S_j)$ at the moment is below $\frac{N}{2}$.
In the second case, the existence of $S_j$ can be guaranteed by the pigeonhole principle.
The items for this phase are defined carefully in what follows.
The set $S_j$ is a subset of classes that is not represented sufficiently in the packing of the algorithm.

The items of phase $j$ (where $1 \leq j \leq d$) have a component of $\eps$ in all indexes $j'<j$ such that the class of phase $j'$ is in $[\nu]\setminus S_j$.  In the $j$th component these items have $1$, while all other components (of these items) are zero (this includes all components larger than $j$ and components smaller than $j$ such that the classes of their phases are in $S_j$).
By definition, the first phase consists of $N$ items that are unit vectors whose first component is equal to $1$, and every phase has exactly $N$ items.

The online algorithm may pack items of phase $j$ only in bins $B$ such that before the item is added it holds that $S(B) \subseteq S_j$, which we show next. This includes packing an item into a new bin, for which the current value of $S(B)$ before the item is packed is the empty set. Let $\ell$ be a class such that $\ell\in[\nu]\setminus S_j$.
If class $\ell$ is currently empty, then every bin $B$ does not contain an item of an empty class, so $\ell \notin S(B)$, as required.
Otherwise, for any bin $B$ such that $\ell \in S(B)$ there is an item with a component of $1$ for some phase $j'$ whose class is $\ell$, while every item of phase $j$ has a $j'$th component of $\eps$ for every such $j'$. In this case there is at least one such value of $j'$.

Since there are $N$ identical items (in this phase and any other phase), and the number of bins represented by $S_j$ is at most $\frac{N}{2}$, we conclude that at least $\frac{N}{2}$ of these items are packed
by the online algorithm into bins   that were not represented by $S_j$. 

If less than $d$ phases were completed, we already saw that the number of bins of the algorithm is at least $\frac{\alpha\cdot N}2$. This may happen after phase $d$ as well, but in that case we will not use this bound, since we analyze $d$ phases generally in a different way. In order to show the bound after $d$ phases, we use a potential function.  Our potential function used for assisting us in finding a lower bound on the number of bins of the algorithm is as follows. It is the sum of $|S(B)|$ over all bins $B$ of the algorithm.  That is, $\Phi = \sum_B |S(B)|$.  Observe that the cost of the online algorithm is at most $\Phi$ and not smaller than $\frac{\Phi}{\nu}$, since for any $B$ it holds that $|S(B)|\leq \nu$.

Let $\gamma=\max \{ 1, \frac{\beta}{5}\}$, which allows us to treat the two cases of $\beta$ uniformly. Consider a bin $B$ that was used to pack an item of phase $j$ by the online algorithm, and $B$ was not represented by $S_j$ (prior to this phase). In this case $S_j$ has at least $\gamma$ elements which are not in $S(B)$.  This holds since by definition of a set that represents a bin, since the cardinality of the symmetric difference between the sets is at least $\gamma$, and since $S(B) \subseteq S_j$ as was shown earlier.

Recall that there are at least $\frac N2$ items of phase $j$ packed by the algorithm into bins not represented by $S_j$, all of which packed into distinct bins. Every such bin $B$ has at least $\gamma$ elements of $S_j\setminus S(B)$. Since $|S_j|\leq \nu$, which holds for all subsets of classes, using the pigeonhole principle, we find that among the elements of $S_j$ there is at least one element that does not belong to the associated sets of at least $\frac{\gamma}{\nu} \cdot \frac{N}{2}$ bins which were used by the online algorithm for packing the items of the $j$th phase.  We pick one such element $o_j \in S_j$ (where $o_j$ is a class),
and pack the items of phase $j$ in the offline solution in class $o_j$.  This is a feasible offline packing, as the bins of class $o_j$ still have zeroes in component $j$, and the items of phase $j$ have zeroes in all components $j'$ such that the items of phase $j'$ are packed into bins of class $o_j$ (since $o_j\in S_j$).
Now, we can repeat and define the items of phase $j+1$ in the same way (if $j<d$).

Furthermore, since $o_j$ was not a member of the sets $S(B)$ for at least $\frac{\gamma}{\nu} \cdot \frac{N}{2}$ bins that the algorithm used for packing the items of phase $j$, we conclude that the value of $\Phi$ increases by at least $\frac{\gamma}{\nu} \cdot \frac{N}{2}$ while packing the $j$th phase items.  Thus, after $d$ phases the value of $\Phi$ is at least $d\cdot \frac{\gamma}{\nu} \cdot \frac{N}{2}$ unless the construction was stopped earlier with the cost of the online algorithm being $\frac{\alpha N}{2}$. The cost of the algorithm in the first case is at least $\frac{\Phi}{\nu} \geq d\cdot \frac{\gamma}{\nu^2} \cdot \frac{N}{2}$.

Recall that the cost of the offline solution is at most $\nu \cdot N$.  It remains to conclude the lower bound, where the lower bound is not smaller if we can increase the value of $\nu$ and it is maximized if $d\cdot \frac{\gamma}{\nu^2} \cdot \frac{N}{2} \approx \frac{\alpha N}{2}$.  Thus, we need a method for selecting $\nu$ if the dimension $d$ is given.

We will use a value $\nu$ for which the corresponding  pair $(\alpha,\beta)$ satisfies $d\geq \frac{\nu^2 \alpha}{\gamma}$ (where $\gamma$ is determined by $\beta$), and then the resulting lower bound would be $\frac{\alpha}{2\nu}$.

First consider the case where $\nu$ is relatively small and we use $(\alpha,\beta)=(2^{\nu}-1,1)$ and thus $\gamma =1$, and $\nu$ is an integer such that $\nu^2\cdot (2^{\nu}-1) \leq d$.  It is sufficient to require that $\nu^2\cdot 2^{\nu} \leq d$ that is satisfied by letting $\nu = \lfloor \log_2 d- 2\log_2 \log_2 d\rfloor$ as for this choice $\nu^2 2^{\nu} \leq (\log_2 d)^2 \cdot \frac{d}{(\log_2 d)^2}=d$.  The resulting lower bound is not smaller than $$\frac{\alpha}{2\nu} \geq \frac{2^{\log_2 d -2\log_2 \log_2 d}-1}{4\cdot (\log_2 d-2\log_2 \log_2 d ) }  \geq \frac{d-1}{8(\log_2 d)^3} \ , $$ where the last inequality holds for $d > 16$ as for these values of $d$ we have that $8\log_2\log_2 d < 4\log_2 d$.

Next, consider the case where the dimension is higher and we could use $(\alpha,\beta)=(2^{\nu/4}, 0.3 \nu)$ and thus $\gamma = 0.06\nu$.  We pick $\nu$ as the largest integer such that $\frac{\alpha \nu^2}{\gamma} \leq d$  that is, $\alpha \nu = \nu 2^{\nu/4}  \leq 0.06 \cdot d$.
Letting $\nu'=\frac{\nu}{4}$ we will require $\nu'\cdot 2^{\nu'} \leq 0.015 d$.  This condition is satisfied e.g. for $\nu'= \lfloor \log_2 d - \log_2 \log_2 d -7 \rfloor $ as for this choice of $\nu'$ we have $$\nu'\cdot 2^{\nu'} \leq (\log_2 d) \cdot 2^{\log_2 d - \log_2 \log_2 d -7} =  (\log_2 d) \cdot \frac{d}{\log_2 d \cdot 2^7} = \frac{d}{2^7} \leq 0.015 d$$ and $\nu'$ is integer and thus also $\nu=4\nu'$ is integer.
The resulting lower bound is $$\frac{\alpha}{2\nu}=\frac{2^{\nu'}}{8\nu'} \geq \frac{2^{\log_2 d - \log_2 \log_2 d -8}}{8\cdot ( \log_2 d - \log_2 \log_2 d -7)} \geq \frac{\frac{d}{\log_2 d}}{2^{11}\cdot \log_2 d} = \frac{d}{2^{11}\cdot (\log_2 d)^2} ,$$  that holds for large enough values of $d$.

Thus, we conclude the construction of this section by the following theorem.

\begin{theorem}
For every fixed dimension $d > 16$, there is a lower bound of $$\frac{d-1}{8(\log_2 d)^3} \ , $$ on the asymptotic competitive ratio
 of online algorithms for VP.
For sufficiently large and fixed  values of $d$ the lower bound on the asymptotic competitive ratio is $$ \frac{d}{2^{11}\cdot (\log_2 d)^2} \ .$$
\end{theorem}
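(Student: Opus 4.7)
The plan is to instantiate the adaptive adversary developed just before the theorem, once for each of the two $(\alpha,\beta)$ pairs supplied by the preceding lemma, and in each case to pick $\nu$ so as to balance the two lower bounds on the online cost. The adversary releases $d$ phases of $N$ identical vectors; in phase $j$ every vector has $1$ in coordinate $j$, zeros in coordinates above $j$, and either $0$ or $\eps$ (with $\eps \leq 1/(Nd)$) in coordinates below $j$, chosen adaptively in response to the algorithm's packing. I simultaneously maintain an offline packing partitioned into $\nu$ classes of $N$ bins each, and for every online bin $B$ track the set $S(B)\subseteq [\nu]$ of classes whose items currently occupy $B$.

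At the start of phase $j$ I first test whether $\sum_{S\in\cs} n(S) \geq \alpha N/2$; if so, Lemma \ref{injection} together with the offline cost bound $\nu N$ already gives a ratio of at least $\alpha/(2\nu)$ and the construction terminates. Otherwise pigeonhole yields some $S_j\in\cs$ representing fewer than $N/2$ online bins, and the vectors of phase $j$ are defined to carry an $\eps$ in coordinate $j'<j$ precisely when the class of phase $j'$ lies in $[\nu]\setminus S_j$. Feasibility then forces at least $N/2$ items to land in bins with $S(B)\subseteq S_j$ and $|S_j\setminus S(B)|\geq \gamma$, where $\gamma=\max(1,\beta/5)$. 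A second pigeonhole across the coordinates of $S_j$ extracts one class $o_j\in S_j$ that joins $S(B)$ for at least $\gamma N/(2\nu)$ online bins, and the offline packing legally places phase $j$ into $o_j$.

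Tracking the potential $\Phi=\sum_B |S(B)|$, each phase raises $\Phi$ by at least $\gamma N/(2\nu)$, so after $d$ complete phases the online cost is at least $\Phi/\nu\geq d\gamma N/(2\nu^2)$ while the offline cost stays at most $\nu N$. Either termination regime delivers a ratio of $\alpha/(2\nu)$ as long as $d\geq \nu^2\alpha/\gamma$. For the pair $(\alpha,\beta)=(2^\nu-1,1)$ with $\gamma=1$, I set $\nu=\lfloor \log_2 d - 2\log_2\log_2 d\rfloor$, verify $\nu^2 2^\nu\leq d$, and a short manipulation using $8\log_2\log_2 d < 4\log_2 d$ for $d>16$ produces the bound $(d-1)/(8(\log_2 d)^3)$. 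For $(\alpha,\beta)=(2^{\nu/4},0.3\nu)$ with $\gamma=0.06\nu$, I set $\nu/4=\lfloor \log_2 d-\log_2\log_2 d-7\rfloor$, verify $\nu\cdot 2^{\nu/4}\leq 0.06 d$, and reach $d/(2^{11}(\log_2 d)^2)$ for all sufficiently large $d$.

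The main technical obstacle is the double pigeonhole that produces the per-phase potential increment: after locating an underrepresented $S_j$, one must extract a single coordinate $o_j\in S_j$ whose addition to the sets $S(B)$ of the $\Omega(N)$ relevant online bins is concentrated enough that the offline packing can absorb the entire phase into that one class. The choice $\gamma=\max(1,\beta/5)$ is exactly what couples the separation guarantee of $\cs$ supplied by the lemma to the growth of $\Phi$, and once that coupling is in place the two stated bounds drop out from the two optimizations of $\nu$.
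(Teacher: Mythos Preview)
Your proposal is correct and mirrors the paper's argument essentially step for step: the same phase structure, the same $\nu$-class offline solution, the same termination test $\sum_{S\in\cs}n(S)\geq\alpha N/2$, the same potential $\Phi=\sum_B|S(B)|$ with the double-pigeonhole extraction of $o_j$, and the identical choices of $\nu$ for the two $(\alpha,\beta)$ pairs leading to the two stated bounds. There is no substantive difference in approach or in the calculations.
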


\section{A lower bound for medium sized dimensions\label{sec:medd}}

Let $\alpha$, $\beta \geq 2$ be the two positive integers such that $d\geq 2+\alpha(\beta-2)$.  Next we show a lower bound of $ \frac{\alpha \cdot \beta}{\alpha+\beta -2}$ for the corresponding special case.  Note that choosing the  values $\alpha=\beta$ results in a lower bound of $\Omega(\sqrt{d})$ so for very large dimension this result is inferior to the general lower bound we considered earlier.
However, the hidden constants in the $\Omega$ notation are smaller for the current construction leading to better lower bounds for medium sized dimensions.

Let $N>d$ be a large integer such that $\frac{N}{\alpha}$ is an integer.    Our sequence of items may have up to $N^3 $ items, and we let $\eps<\frac{1}{N^3}$.  We will use the adaptive construction method to generate a sequence of
scalar
values
where $a_i$ is the value associated with the $i$th item, such that all values are smaller than $\eps$, and furthermore the following condition holds.  If an item is large, then its value is at least $N^3 $ times larger than the value of a small item.  The logical condition that we will use to define small and large items is that a
$d$-dimensional
item is large if it is packed into an empty bin and otherwise it is small. We stress the property that every item of the construction will have a
one-dimensional associated
value, and we will explain how this value is used in the definition of the
$d$-dimensional item.

During the adaptive construction, after packing the current item, there will be a value $\mmu<\frac 1N <\frac 12$. The value of $\mmu$ may decrease (but cannot increase) after assigning an item. The value $\mmu$ will satisfy the property that the value of every large item that appeared up to (and  including) the current iteration is strictly larger than $\mmu$ while for any subset of items $S$ where $S$ contains only small items that appear in the instance
(both during the prefix and later on) or large items that appear later on in the input sequence, the total value of $S$ is strictly smaller than $\mmu$.  This is obtained by letting $\mmu$ be the current upper bound on values of items that can still be either small or large at termination, and reducing the length
of the interval of possible values in the adaptive construction by a multiplicative factor of $N^3$ after each item. This is done to ensure that all such subsets $S$, whose  number of elements will be less than $N^3$ (as this will be a valid upper bound on the number of items in the entire construction), will satisfy the requirement.

Our construction will have $\beta$ phases, and it will be useful to denote by $\mmu_i$ the
current value of $\mmu$ at the end of phase $i$ (i.e., after packing the last item of phase $i$ and modifying the current interval according to the rules of the adaptive construction). Furthermore, phase $i$ uses the value $\mmu_{i-1}$.

The first and last phase have special properties while the intermediate phases ($\beta-2$ phases) are all similar.  Each phase lasts until the first point in time in which the algorithm has opened $N$ new bins during this phase.  Thus, we will ensure that the
total
cost of the algorithm is $N\cdot \beta$.  We also maintain an integer value $\pi$ denoting the {\em current component} that is being dealt with, and it has a special role in the construction. We will show later that $\pi$ will always be an index of a component (i.e., $\pi \leq d$), even though it is increased frequently.
The value $\pi$ is an index of a component such that items have a very big (and close to $1$) component of this index. By increasing $\pi$, we change (and increase) the position of this very big component in the construction.
This value is initially set as $\pi=2$ (while the very first component has a special role during the first phase), and it increases gradually,
each time by $1$, and it never decreases,
as items are being presented. We will see that the value of $\pi$ never exceeds $d$, and during the presentation of items of intermediate phases it will hold that $\pi \leq d-1$.

\paragraph{The first phase.} We construct a sequence of items, where the first component of every vector is $\frac{1}{N}$ while every other component equals to the value of the current item.  Note that this phase ends after at most $N^2$ items, since any phase ends after the algorithm used $N$ new bins, every new bin can contain at most $N$ items, and there are no bins of previous
phases which can be used.

\paragraph{The $\boldsymbol{\beta-2}$ intermediate phases.}  Every such phase will contain at most $\alpha \cdot N$ items.
We keep a counter $j$ of the index of the phase, where $j$ is initialized to $2$.  At the end of phase $j-1\geq 1$ we set $\mmu_{j-1}$
as we described above and we start presenting new items of the $j$th phase.

Each item of these $\beta-2$ intermediate phases will consist of the following components.  All components
with indexes
smaller than $\pi$ are equal to $0$, the current component
with index
$\pi$ is set to $1-\mmu_{j-1}>\frac 12$, and all other components (of larger indexes) are equal to the value of the current item (of the adaptive construction).  Recall that a new phase starts whenever the number of new bins during the current phase is $N$, and just before starting a new phase we also increase $\pi$ by $1$ (for $j=2$ $\pi$ is not increased but it is initialized).
However, there are other events where we decide to increase the value of $\pi$ by $1$.
These additional events are stated as follows.  Whenever the number of large items (according to the adaptive construction) that were packed while the value of $\pi$ is its current value, is $\frac{N}{\alpha}$, we increase the value of $\pi$ by $1$. This is done since the number of large items whose $\pi$th component is very big is the maximum possible number.
We will show that an increase in the value of $\pi$ will happen after at most $N$ consecutive items for which we used the same value of $\pi$.
This happens either due to the latter rule or due to the end of the phase (since $\pi$ is always increased due to that event).   Before presenting the vectors of the last phase, we prove the main correctness claims regarding the intermediate phases that allow our construction to have the required structure and allow us to prove the claimed lower bound on the asymptotic competitive ratio.

\begin{lemma}
The items of phase $j$
(where $2 \leq j \leq \beta-1$)
cannot be packed into bins that were opened in an earlier phase, that is, bins used first for an item of an earlier phase.  Additionally, the value of $\pi$ remains constant without being increased for at most $N$ items.
\end{lemma}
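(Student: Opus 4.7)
My plan is to base both parts of the lemma on a single geometric observation: at the moment a phase-$j$ item is about to be packed, every bin already opened by the algorithm carries, at coordinate $\pi$, at least the scalar value of the item that opened it, and this scalar value is strictly larger than $\mmu_{j-1}$ by the adaptive-construction guarantee together with the monotonicity of $\mmu$. Adding a phase-$j$ item, whose $\pi$-coordinate equals $1-\mmu_{j-1}$, would therefore push coordinate $\pi$ strictly above $1$ and violate the capacity constraint.

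For the first statement I would induct on $j$. By the induction hypothesis applied to all phases strictly between $j'$ and $j$, every bin $B$ opened in phase $j'<j$ currently contains only items of phase $j'$ (items of earlier phases were all packed before $B$ existed). The opener of $B$ was placed in an empty bin and is therefore large in the adaptive sense, so its scalar value $a_{i_0}$ exceeds the value of $\mmu$ current at the moment it was packed and, by monotonicity of $\mmu$, also exceeds $\mmu_{j-1}$. The remaining point is to check that the current value of $\pi$ strictly exceeds the position at which the opener placed its big coordinate: for $j'=1$ the big coordinate sits at position $1$ and $\pi \ge 2$ throughout, while for $j'\ge 2$ the value of $\pi$ has grown by at least $1$ between the packing of the opener and the presentation of our phase-$j$ item, either through a phase-boundary increment or through event (ii). Hence the opener contributes its own large value $a_{i_0}>\mmu_{j-1}$ to coordinate $\pi$ of $B$, so coordinate $\pi$ of $B$ already exceeds $\mmu_{j-1}$ and adding $1-\mmu_{j-1}$ overflows it.

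For the second statement, the same reasoning rules out any bin already opened within the current phase $j$ as well: if such a bin's opener used an earlier value $\pi''<\pi$, then coordinate $\pi$ of that bin contains the opener's large scalar value and hence exceeds $\mmu_{j-1}$; if the opener used the current value of $\pi$, then coordinate $\pi$ of that bin already equals $1-\mmu_{j-1}$, and a second such contribution would reach $2-2\mmu_{j-1}>1$ since $\mmu_{j-1}<1/N<1/2$. Together with the first statement this forces every phase-$j$ item to be packed into a new, empty bin, and so to be classified as large by the adaptive construction. Event (ii) then increments $\pi$ as soon as $N/\alpha$ large items have been packed under the current $\pi$, so the current value of $\pi$ cannot persist for more than $N/\alpha\le N/2\le N$ consecutive items.

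The main obstacle I expect is the careful bookkeeping of which instance of $\mmu$ is invoked where: the adaptive construction only gives us "a large value is strictly above the current $\mmu$" and "a sum of small values is strictly below the current $\mmu$", and the cleanest path is to reduce everything to "opener value $>\mmu_{j-1}$" via monotonicity and then to bound coordinate $\pi$ of an existing bin from below by that single opener value rather than by any aggregate of small contributions. A minor separate subtlety is the exceptional rule "$\pi$ is not incremented at the transition from phase $1$ to phase $2$"; this is harmless because every phase-$1$ opener places its distinguished component at position $1$ while $\pi\ge 2$ holds throughout the intermediate phases.
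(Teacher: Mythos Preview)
Your argument for the first part is sound: the opener of any bin from an earlier phase $j'<j$ is large by construction, and since it appeared before $\mmu_{j-1}$ was fixed, its scalar value exceeds $\mmu_{j-1}$; since the current $\pi$ lies strictly to the right of the opener's distinguished coordinate, the opener alone already loads coordinate $\pi$ of that bin above $\mmu_{j-1}$, blocking the new item. (The induction on $j$ is harmless but unnecessary: you only use the opener, and the opener of a bin first used in phase $j'$ is a phase-$j'$ item by definition.) This matches the paper's proof.

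The second part, however, contains a genuine error. You claim that if a bin $B$ was opened earlier in the \emph{same} phase $j$ with a smaller value $\pi''<\pi$, then the opener's scalar value exceeds $\mmu_{j-1}$. This is false. The opener of such a bin is a large item \emph{of phase $j$}, i.e.\ it appeared \emph{after} $\mmu_{j-1}$ was fixed. The adaptive guarantee says precisely that the sum of all small items together with all large items appearing later is strictly below $\mmu_{j-1}$; in particular each large item of phase $j$ individually has value strictly less than $\mmu_{j-1}$, not greater. Hence the load of $B$ at coordinate $\pi$ can be well below $\mmu_{j-1}$, and a phase-$j$ item with $\pi$-coordinate $1-\mmu_{j-1}$ may legally enter $B$. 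Your conclusion that ``every phase-$j$ item is forced into a new bin'' therefore fails, and indeed it contradicts the setup: intermediate phases may contain up to $\alpha N$ items while opening only $N$ new bins, so most items are small. Your final bound of $N/\alpha$ items per $\pi$-value is likewise unsupported.

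The paper's argument for the second part avoids this trap. It does not try to rule out re-use of earlier same-phase bins. Instead it observes that any two items sharing the current $\pi$ have $\pi$-coordinate $1-\mmu_{j-1}>\tfrac12$ and hence occupy distinct bins; by the first part all such bins belong to phase $j$; and phase $j$ terminates (and $\pi$ is incremented) the moment its $N$th bin opens. So $N+1$ items with a common $\pi$ would require $N+1$ distinct phase-$j$ bins, which is impossible.
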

\begin{proof}
Each bin that was opened by the algorithm in the first $j-1$ phases has a large item (since by the adaptive construction, every first item of any bin of the algorithm for these phases is
large). Every large item has
at least one component that is larger than $\mmu_{j-1}$, where those components form a suffix starting with component that is the value of $\pi$ (at the time of packing the item) plus $1$. Since $\pi$ is increased by $1$ when a new phase starts, the component
indexes
corresponding to the current (larger) value of $\pi$ is larger than $\mmu_{j-1}$. Since for the items of phase $j$
there is a component equal to $1-\mmu_{j-1}$ for the current value of $\pi$, for which bins constructed
in previous phases have a sum larger than $\mmu_{j-1}$, they cannot be packed into bins of the algorithm containing items of earlier phases.
This concludes the first part.

Regarding the second claim, it is only required to consider the items of one phase, since $\pi$ is increased after every phase ends.
Assume by contradiction there are $N+1$ items that have a common value of the current component.
By the first part, these items are packed into new bins opened during the phase. Since items having the same value of $\pi$ have a $\pi$th component larger than $\frac 12$, they are packed into different bins. Thus, the phase would end before the $(N+1)$th item is presented (or it may end earlier or $\pi$ may be increased earlier), a contradiction.
\end{proof}

\begin{lemma}
During an intermediate phase, the value of $\pi$ is increased at most $\alpha$ times
 by $1$ (including the increase due to the end of the phase).
\end{lemma}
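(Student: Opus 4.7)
The plan is to bound the number of $\pi$-increments during an intermediate phase by counting its large items and partitioning them according to the value of $\pi$ that was active when each was packed.

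First I would establish that the phase contains exactly $N$ large items. By the adaptive construction, an item is large precisely when it is packed into a previously empty bin, so the number of large items in the phase equals the number of new bins opened during the phase; since the phase terminates at the instant the $N$-th new bin is opened, this number is $N$.

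Next, let $\pi_0$ denote the value of $\pi$ at the start of the phase (inherited from the end of phase $j-1$, or $\pi_0 = 2$ if $j=2$). Because $\pi$ only increases, the values it takes during the phase form a consecutive block $\pi_0,\pi_0+1,\ldots,\pi_0+(k-1)$ for some integer $k \geq 1$. Partition the $N$ large items of the phase by the value of $\pi$ that was active at their packing. For every $i \in \{0,\ldots,k-2\}$, the transition from $\pi_0+i$ to $\pi_0+i+1$ takes place strictly inside the phase and is therefore triggered by the $N/\alpha$-rule, so the group corresponding to $\pi_0+i$ contains exactly $N/\alpha$ large items. The final group (associated with $\pi_0+(k-1)$) contains at least one large item, since $\pi$ only advances to a new value upon processing some item at that value. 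Summing the group sizes yields $(k-1)\cdot (N/\alpha) + 1 \leq N$, hence $k \leq \alpha$.

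Finally, the total number of increments of $\pi$ during the phase is $(k-1)$ rule-triggered increments (one per transition inside the phase) plus the single end-of-phase increment, for a total of $k \leq \alpha$, which is the bound claimed. The one delicate point, and the place where the bookkeeping has to be spelled out, is the boundary event in which the $N$-th large item would simultaneously complete a group of $N/\alpha$ items and open the $N$-th new bin. Under the natural reading of the construction the end-of-phase rule supersedes (and is counted in place of) any simultaneous $N/\alpha$-trigger, so no extra increment arises and the last group may contain up to $N/\alpha$ items without a spurious increment being double-counted. Making this tie-breaking convention explicit is the only subtlety in the argument; once it is fixed, the partition bound of the preceding paragraph gives the lemma.
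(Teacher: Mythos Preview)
Your proof is correct and follows essentially the same idea as the paper's: there are exactly $N$ large items in an intermediate phase (one per new bin), and since each mid-phase increment of $\pi$ consumes a batch of $N/\alpha$ large items, at most $\alpha$ increments can occur in total, with the final one coinciding with the end-of-phase increment. The paper compresses this into two sentences, while you spell out the partition by $\pi$-value and explicitly address the tie-breaking when the $N$-th large item simultaneously completes a batch; this added care is sound and does not depart from the paper's approach.
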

\begin{proof}
There are
exactly
$N$ large items in a phase, since every phase has $N$ new bins, and by the construction, the large items are exactly the first items of new bins.
Thus by definition the events in which we increase $\pi$ may happen at most $\alpha$ times during a phase (and in the last such event, the phase ends as well).
\end{proof}

We consider the value of $\pi$ at the beginning of the last phase.
By the last lemma, we conclude that just before
the moment when phase $\beta -1$ ends (the last intermediate phase), we have $\pi \leq 2+(\beta-2)\cdot \alpha+(\beta-1)\leq d-1$
and $\pi$ is increased to a value of at most $d$ once that phase ends. This final value  (of at most $d$)
for $\pi$ was not used as the value of $\pi$ in the definition of items of any phase (after the very last time that $\pi$ was increased, no items are defined so it was not used to define an item).

\paragraph{The last phase.}  In the last phase we present exactly $N$ identical items that are defined as follows.  In component $d$ they are equal to $1-\mmu_{\beta -1}$ and all other components are $0$.  Observe that every bin that the algorithm has opened in one of the earlier phases has one large item whose $d$th component is larger than $\mmu_{\beta -1}$, and thus the algorithm needs to open $N$ new bins for these $N$ items of the last phase.

\paragraph{Proving the resulting lower bound.}  Since there are $\beta$ phases and the algorithm is forced to open $N$ new bins in every phase, we conclude that the cost of the algorithm is exactly $N\cdot \beta$.  Since $N$ could be an arbitrary large integer, in order to prove the lower bound on the asymptotic competitive ratio of the algorithm, it suffices to show that the optimal offline cost is at most $N\cdot (1+\frac{\beta-2}{\alpha})+1$.  In order to present this proof, we will consider
all items of the last phase as small items.  We present an offline solution of cost at most $N\cdot (1+\frac{\beta-2}{\alpha})+1$.  This offline solution will pack all large items into $N\cdot (\frac{\beta-2}{\alpha})+1$ bins, and all small items into a disjoint set of $N$ bins, and in total we will use at most $N\cdot (1+\frac{\beta-2}{\alpha})+1$ bins.  First, we consider the large items.
\begin{lemma}
There is an offline solution that packs all large items into at most $N\cdot (\frac{\beta-2}{\alpha})+1$ bins.
\end{lemma}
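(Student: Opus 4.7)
The plan is to exploit the strong structural regularity that the adaptive construction imposes on the large items within each phase, packing them in groups whose ``big'' components at $\pi$ are spread across different coordinates. To set up, I would first count: since a phase ends exactly when $N$ new bins have been opened and since every large item is by definition the first item placed in some online bin, phase~$1$ and each of the $\beta-2$ intermediate phases contribute exactly $N$ large items, and the last phase contributes none, for a total of $(\beta-1)N$. Within each intermediate phase $j$, the rule ``advance $\pi$ after every $N/\alpha$ large items'', combined with the fact that $N/\alpha$ is an integer and that exactly $N$ large items are created per phase, forces these items to fall into exactly $\alpha$ groups of size $N/\alpha$, one per distinct value of $\pi$ used in phase $j$; denote these values by $\pi^{(j)}_1<\pi^{(j)}_2<\cdots<\pi^{(j)}_\alpha$.

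The offline packing I would construct is then the obvious one: put all $N$ large items of phase $1$ into a single bin (their first components sum to $N\cdot\tfrac{1}{N}=1$ and all other components sum to at most $N\eps<1$), and for each intermediate phase $j$ form $N/\alpha$ bins by placing exactly one representative from each of the $\alpha$ groups of phase $j$ into each bin. Summed over all phases this yields the claimed $1+(\beta-2)\cdot\tfrac{N}{\alpha}$ bins.

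The main obstacle is verifying the capacity constraint for an intermediate-phase bin. Fix such a bin containing items $i_1,\ldots,i_\alpha$ with $\pi_{i_t}=\pi^{(j)}_t$, and consider an arbitrary component $c\in\{1,\ldots,d\}$. If $c=\pi^{(j)}_t$ for some $t$, then item $i_t$ contributes the big entry $1-\mmu_{j-1}$, items $i_s$ with $s<t$ contribute their adaptive values $a_{i_s}$ (because $c>\pi_{i_s}$ places $c$ in their suffix of value-entries), and items $i_s$ with $s>t$ contribute $0$ (because $c<\pi_{i_s}$). If $c$ is not of this form, only a subset of the $a_{i_s}$ appears, namely those $i_s$ with $\pi_{i_s}<c$. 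In every case, the non-big contribution is the sum of adaptive values of a subset of $\{i_1,\ldots,i_\alpha\}$, all of which are items of phase $j$ and therefore appear strictly after the end of phase $j-1$. By the defining property of $\mmu_{j-1}$ from the adaptive construction (the total value of any subset consisting of small items and future large items is strictly less than the current $\mmu$), this subset sum is strictly less than $\mmu_{j-1}$. Consequently every component sum is strictly less than $(1-\mmu_{j-1})+\mmu_{j-1}=1$, the bin is feasible, and summing the bins produced in the three steps yields the stated bound.
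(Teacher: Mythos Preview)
Your proof is correct and follows essentially the same approach as the paper: one bin for the $N$ large items of phase~$1$, and for each intermediate phase a round-robin packing of its $N$ large items into $N/\alpha$ bins so that every bin contains one large item per value of $\pi$ used during the phase. Your feasibility analysis is in fact slightly more careful than the paper's (you distinguish the contributions $0$, $a_{i_s}$, and $1-\mmu_{j-1}$ by comparing $c$ with each item's $\pi$-value), but the underlying argument---that the non-big entries are adaptive values of phase-$j$ items and hence sum to less than $\mmu_{j-1}$---is identical.
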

\begin{proof}
Note that there are $\beta-1$ phases containing large items (as in the last phase all items are small).  Therefore, it suffices to show that it is possible to pack all large items of a common phase $i$ into $\frac{N}{\alpha}$ bins and to pack all large items of the first phase into one bin.

Consider the first phase.  There are $N$ large items, and one can pack all such $N$ items into a single bin because in the first phase every component of the items of the phase is at most $\frac{1}{N}$.

Next, consider the set of large items of phase $i$ for $i\geq 2$.  We pack these large items into $\frac{N}{\alpha}$ bins, as follows.  We traverse the sublist of these large items sorted by their arrival order (i.e., by the same order they had in the original input) and we pack them one by one into these bins using a round-robin approach.  That is, the large item of index $\frac{N}{\alpha} \cdot \Gamma+\gamma$ for an
integer $\Gamma$ and for $\gamma\in \{ 0,1,\ldots ,\frac{N}{\alpha}-1 \}$
will be packed into the bin of index $\gamma+1$.
Here the index of the item is its position along the sublist of large items of phase $i$, and the index of the bin is among the $\frac{N}{\alpha}$ dedicated bins for large items of this phase.  Alternatively, this packing can be seen as $\frac{N}{\alpha}$ bins, where every bin has one large item for every  value of $\pi$ used in the suitable
phase.
To see that this is a feasible packing note that in every component and every bin there is at most one item packed into the bin that has this component equals to $1-\mmu_{i-1}$ and all other items packed there have this component equal to their values.
Since the total
associated value of this set of items is smaller than $\mmu_{i-1}$ as they are all items of iterations after the iteration in which we define $\mmu_{i-1}$, these items do not exceed the bound of $1$ in every component (on their sum).
\end{proof}

Next, we consider packing of the small items into $N$ bins.
\begin{lemma}
There is an offline solution that packs all small items into $N$ bins.
\end{lemma}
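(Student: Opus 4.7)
The plan is to construct an explicit offline packing of all small items into $N$ bins, labeled $B_1,\ldots,B_N$, and then verify the capacity constraint one component at a time. For the distribution, I would first spread the phase-$1$ small items round-robin across the $N$ bins. Phase $1$ contains at most $N^{2}$ items, because the phase ends when the algorithm has opened $N$ new bins and each such bin can contain at most $N$ items whose first component equals $1/N$; hence each $B_i$ receives at most $N$ such items. Next, for every intermediate phase $j$ and every value of $\pi$ used during that phase, the earlier lemma asserting that $\pi$ stays constant for at most $N$ consecutive items shows that there are at most $N$ items sharing the pair $(j,\pi)$, and I assign these to distinct bins. Finally, the $N$ identical last-phase items are placed one per bin.

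For feasibility, the key observation is that the values of $\pi$ employed across different pairs $(j,\pi)$ are pairwise distinct, since $\pi$ is strictly increased at the end of each phase and across phases the ranges of $\pi$ are disjoint. These $\pi$-values lie in $\{2,\ldots,d-1\}$, while component $1$ is populated only by phase-$1$ items (each contributing $1/N$) and component $d$ only by last-phase items. Component $1$ therefore sums to at most $N\cdot(1/N)=1$. For each $c\in\{2,\ldots,d-1\}$, at most one item in any single bin, namely the one whose pair is $(j,\pi)$ with $\pi=c$, contributes the big entry $1-\mmu_{j-1}$, while the remaining items contribute only their tiny adaptive values in that component. Invoking the adaptive-construction invariant that the total value of all small items in the instance is strictly less than the current $\mmu$, which has dropped to $\mmu_{\beta-1}$ by the end of phase $\beta-1$, and using $\mmu_{\beta-1}\leq \mmu_{j-1}$, the total in each component $c\in\{2,\ldots,d-1\}$ is at most $(1-\mmu_{j-1})+\mmu_{\beta-1}\leq 1$. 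Component $d$ is handled identically with $\mmu_{\beta-1}$ playing the role of $\mmu_{j-1}$.

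The main obstacle will be the last verification step: confirming that the accumulated value contributions of all small items sharing a bin are really absorbed by the slack $\mmu_{j-1}$ left in the big component. This is not a pointwise bound on individual item values but requires the adaptive-construction invariant on subsets of small items, which is precisely designed to keep the sum of many small values below the current $\mmu$; together with the monotonicity of $\mmu$ this yields the desired inequality in every component.
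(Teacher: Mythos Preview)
Your argument is correct and reaches the same conclusion as the paper, but the distribution rule is organised differently. The paper applies a \emph{single} round-robin to the entire sublist of small items in their original arrival order; feasibility in component $j>1$ then follows because the at most $N$ items carrying the large $j$th entry occur consecutively in the input, so round-robin automatically sends them to distinct bins. You instead partition the small items into blocks---the phase-$1$ small items, one block per $(j,\pi)$ pair in the intermediate phases, and the last-phase items---and spread each block across distinct bins separately, relying on the global strict increase of $\pi$ to guarantee that different blocks never place two large entries in the same component. Both routes ultimately hinge on the same adaptive-construction invariant to absorb the accumulated small values into the $\mmu_{j-1}$ slack; the paper's version is a bit slicker since it needs no bookkeeping of $(j,\pi)$ pairs or of the distinctness of $\pi$ across phases, whereas your decomposition makes the role of the intermediate-phase structure more explicit.
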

\begin{proof}
We consider the sublist of (only) the small items sorted by their arrival order in the original instance.  Once again we use the round-robin approach and we pack these items into $N$ bins using round-robin.  That is, the $j$th small item (i.e., counting only small items) is packed into the bin of index $1+((j-1) \mod N )$.  It suffices to show that the resulting packing is feasible.

Consider one component in one bin of this packing and we will show that the sum of the component over all items packed into this bin is at most $1$.  For the first component, the claim holds as only the items of the first phase have nonzero first component, and since there are less than $N^2$ such (small) items, by averaging, we pack at most $N$ first-phase small items into this bin, so their total first component is at most $1$.  Next, consider the $j$th component where $j>1$.  In this component, the bin has at most one vector with $j$th component larger than $1-\eps$
(since there are at most $N$ such vectors in the instance and they appear consecutively along the input sequence, so they are assign to distinct bins by the round-robin approach).  The other vectors packed into the bin are such that their $j$th component is their
 associated value or zero, and these are small items, so they all fit into this component (no matter what is the phase in which the item with $j$th component larger than $1-\eps$
was presented).
\end{proof}

Thus, we conclude the following result.
\begin{theorem}
If $d\geq \alpha(\beta -2)+2$, then there is no online
algorithm for VP whose asymptotic competitive ratio is smaller than $\frac{\beta}{1+\frac{\beta-2}{\alpha}} = \frac{\alpha \cdot \beta}{\alpha+\beta -2}$.
\end{theorem}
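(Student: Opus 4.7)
The plan is to assemble the lower bound directly from the construction and auxiliary lemmas already developed in this section. First I would pin down the online cost: each of the $\beta$ phases terminates precisely when the algorithm opens its $N$th new bin of that phase, and by the earlier lemma no item of phase $j$ can be placed in a bin first used in an earlier phase. Hence the $N$ new bins of different phases are disjoint and the online cost is exactly $N\beta$.

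Next, I would invoke the two offline packing lemmas to produce an offline solution of cost at most $N\bigl(1+(\beta-2)/\alpha\bigr)+1$. Treating the last-phase items as small, the first packing lemma accommodates all large items in $N(\beta-2)/\alpha + 1$ bins (a single bin for phase $1$, then $N/\alpha$ bins per intermediate phase via round-robin over the at-most-$\alpha$ values of $\pi$ used in that phase), and the second packing lemma accommodates all small items in $N$ further bins by round-robin on arrival order.

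Then the dimension hypothesis $d\geq\alpha(\beta-2)+2$ needs to be cashed in to confirm the construction is well-defined: combining the lemma that $\pi$ is incremented at most $\alpha$ times per intermediate phase (including the end-of-phase bump) with $\pi=2$ initially and $\beta-2$ intermediate phases in total, $\pi$ stays at most $d-1$ throughout those phases so that the large components of intermediate-phase items are valid indices, while the last phase places its large mass in coordinate $d$ directly. This is the sole place the quantitative dimension bound is consumed.

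Finally I would form the ratio
$$\frac{N\beta}{N\bigl(1+(\beta-2)/\alpha\bigr)+1} \;\longrightarrow\; \frac{\beta}{1+(\beta-2)/\alpha} \;=\; \frac{\alpha\beta}{\alpha+\beta-2}$$
as $N\to\infty$; since $OPT$ grows linearly in $N$, the constructed family witnesses an asymptotic competitive ratio lower bound. The heavy lifting is already done in the preceding lemmas, so the only residual obstacle is careful bookkeeping — in particular the dimension count governing $\pi$, which is where the precise constant $\alpha(\beta-2)+2$ enters and which determines the exact form of the bound.
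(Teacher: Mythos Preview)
Your proposal is correct and follows essentially the same route as the paper: the theorem is stated as a direct consequence of the preceding construction and the two packing lemmas, with the online cost being exactly $N\beta$, the offline cost bounded by $N(1+(\beta-2)/\alpha)+1$, and the dimension hypothesis $d\geq\alpha(\beta-2)+2$ consumed precisely in the bookkeeping that keeps $\pi\leq d-1$ during the intermediate phases. There is nothing to add.
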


For large values of $d$, we can use $\alpha=\beta = \lfloor\sqrt{d-1} \rfloor+1 \geq \lceil \sqrt{d-1} \rceil$  for which $ \alpha(\beta -2)+2 \leq (\sqrt{d-1}+1)\cdot (\sqrt{d-1}-1)+2 = d-1-1+2=d$ and this lower bound on the asymptotic competitive ratio is $ \frac{\alpha \cdot \beta}{\alpha+\beta -2} = \frac{\alpha^2}{2\alpha-2}= \frac{\alpha+1}2+\frac{1}{2(\alpha-1)}>\frac{\lfloor\sqrt{d-1} \rfloor }{2}+1 \geq \frac{\sqrt{d-1}+1}2 >\frac{\sqrt{d}}2$.  However, for small dimensions we could do better.  For example for $d=98$, we could pick $\alpha=12, \beta=10$ and the lower bound on the asymptotic competitive ratio is $\frac{120}{20} = 6$ whereas using $\alpha=\beta=10$ the lower bound is $\frac{100}{18}\approx 5.555$.

For small dimensions, namely $d=6, 7, 9, 10$ and $11$, the next estimation can be used.
Letting $\beta=3$ and $\alpha=d-2$, the lower bound is greater or equal to
$\frac{3\alpha}{\alpha+1} = \frac{3}{1+\frac{1}{\alpha}}= \frac{3}{1+\frac{1}{d-2}}$.
It gives a lower bound of
$2.4$, $2.5$,
$2.625$,
$2.666$, and
$2.7$ for the cases of $d=6,7,9,10,11$, respectively. We mention several other small values of $d$. For $d=12$, we can use $\alpha=5$ and $\beta=4$ to obtain a lower bound of $\frac {20}7 \approx 2.857$. For $d=14$,
we can use $\alpha=6$ and $\beta=4$ to obtain a lower bound of $3$. For $d=16$, we can use $\alpha=7$ and $\beta=4$ to obtain a lower bound of $\frac {28}9 \approx 3.111$.
Improved bounds for the cases $d=3,4,5,8$ are presented in the next sections.

\section{The case $\boldsymbol{d=3}$\label{sec:d=3}}
Recall that the known lower bound on the asymptotic competitive
ratio for $d=2$ is just slightly above $2$, and this was the best known constant lower bound for any small value of $d$ till now.
We prove here a lower
bound of $2.25$  for the case $d=3$,
and explain how it can be slightly improved.

We will use an adaptive construction as explained earlier. The construction is based on that of \cite{BBDEL_CCBP}.

Let $K>1000$ be a large integer, and let $\eps>0$ be a small
constant (in particular, $\eps<\frac{1}{K}<0.001$).  The input consists of three parts and we describe the parts one by one.

\paragraph{The first part of the input.}  Using an
adaptive construction of values, we define a sequence of values in
$(0,\eps)$ such that any large value is
strictly larger by a multiplicative
factor larger than $10K$ from any small value. The binary
condition is that the item is packed into an empty bin by the
online algorithm.

Thus, an item packed into an empty bin is large,
and otherwise the item is small. The number of items will be
$2\cdot K \cdot N$ for a large integer $N>0$. Letting $a_1, a_2,
\ldots, a_{2KN}$ be the sequence of values, the vector for item
$i$ is defined as follows. The first component is $\frac 1{K}$,
and each one of the two other components is equal to $a_i$. Let
$\gamma$ be a threshold such that if the
$i$th constructed value is small, it holds
that $0< a_i < \frac{\gamma}{10K}$ and otherwise $\gamma<a_i
<\eps$. The input up to this point is
denoted by $I_0$.
\begin{lemma}\label{lem10}
The optimal cost for packing $I_0$ is $2N$.
\end{lemma}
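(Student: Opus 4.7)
The plan is to prove $\OPT(I_0)=2N$ by matching upper and lower bounds. The first component of every vector in $I_0$ is exactly $\frac{1}{K}$, which gives the lower bound $\OPT(I_0) \ge 2N$ immediately: any feasible bin can contain at most $K$ items, for otherwise its first-coordinate sum would strictly exceed $1$; since $|I_0| = 2KN$, one needs at least $\lceil 2KN/K\rceil = 2N$ bins.

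For the matching upper bound I would exhibit an explicit offline packing. Partition the $2KN$ items of $I_0$ arbitrarily into $2N$ disjoint groups of exactly $K$ items each, and pack each group into its own bin. In the first coordinate each bin sums to exactly $K\cdot\frac{1}{K}=1$. In the second and third coordinates every item contributes $a_i$, and by the construction of the values every item (whether ultimately classified as large or small by the adaptive process) satisfies $0<a_i<\eps<\frac{1}{K}$. Therefore the per-bin sum in each of these two coordinates is strictly less than $K\cdot\frac{1}{K}=1$, so the packing is feasible and uses exactly $2N$ bins.

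There is no real obstacle here: the adaptive construction was deliberately set up so that $\eps<\frac{1}{K}$ forces the first coordinate to be the sole binding constraint for any reasonable offline packing, which pins $\OPT(I_0)$ to the trivial counting bound. Note that it is critical for this argument that the offline solver can freely reorder items and does not need to distinguish large from small; the whole point of the lemma is that $I_0$ looks completely harmless offline, while the interest will come later, when the behavior of the \textbf{online} algorithm against this packing (together with the subsequent two parts of the input) yields the $\frac{9}{4}$ competitive lower bound.
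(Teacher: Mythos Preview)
Your proof is correct and follows essentially the same approach as the paper's: both use the first coordinate $\frac{1}{K}$ to get the lower bound (at most $K$ items per bin, hence at least $2N$ bins) and the fact that the remaining components are all below $\eps<\frac{1}{K}$ to exhibit a feasible packing with $K$ items per bin for the matching upper bound. The paper's version is simply terser.
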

\begin{proof}
Since all first components
of all items
are equal to $\frac{1}{K}$
while the other components are smaller, an optimal solution can
pack all items into $2N$ bins and it cannot pack the items into a smaller number of bins.
\end{proof}

Let $X$ denote the number of bins used by the
algorithm for the first part of the input and by definition this is also the number of large items. Let
$\mu=\frac{\gamma}{10}$. Thus, every $K$ small values have total
value below $\mu$.

\paragraph{The second part of the input.} For the value of $\mu$ that is based on the action of the
algorithm, we define the next part of the input. There are $N$
items of each one of the two types: $(0,1-4\cdot \mu,\mu)$ and
$(0,\mu,1-4\cdot \mu)$.
So, in total there are $2N$ items of these types.
The input at this time is called $I_1$, i.e. $I_1$ is the input consisting of the first two parts of the input together.

\paragraph{Two offline packings of $\boldsymbol{I_1}$.} We
define two offline packings, for which the first part of the input
is packed in a fixed manner.  For each possibility of the third part of the input, we will use one of those offline packings that we present here. Consider the first part of the input
(the items of $I_0$), and separate small items from large items.
Large items are packed such that every bin has $K$ of them (where
one such bin may have a smaller number of these items). The large
items require $\lceil \frac XK \rceil$ bins. These are feasible bins because none of the
components of the sum of any bin
is above $1$, since no component of any item is
above $\frac{1}{K}$. Small items are also packed $K$ in each bin,
and there are at most $2N$ such bins since the total number of
items for the first part is $2KN$.
The total number of small items is $2KN-X$.
The first component of the bin
has load $1$, but the other components have loads below $\mu$.
Every such bin can also receive one item of each type of the
second part. In one packing, we partition the items of the second part into pairs where every pair consists of items of different types.  In this offline packing each pair is packed together, these pairs are first packed into bins
with $K$ small items of the first part, and if there are any
unpacked items of the second part, they are packed into new bins
(also in pairs). In the second packing, every bin gets just one
such item of the second part. For both offline packings, the numbers of bins do not exceed $\frac{X}{K}+1+2N$.

\paragraph{The third part of the input.} The third part of the input may contain two alternative sets of
items.
In the first case, leading to the input $I_{21}$, there are
$N$ items of the type $(0, 1-\mu, 1-\mu)$. Every such item is
packed into a different bin by any algorithm. In the offline packing, these items are packed first into bins with (at most) $K$ small items of the
first part (but without large items of the first part and without any
items of the second part) and then into new bins.
The online algorithm cannot combine such items with any item into the same bin, as we will see.

In the second case, leading to the input $I_{22}$, there are $N$
items of each of the types: $(0,3\mu, 1-2\mu)$, $(0,
1-2\mu,3\mu)$. No pair of such items can be packed into one bin,
but it can join a bin with (at most) $K$ small items of the first part and
one item of the second part (of the suitable type) but without large items of the first part.
It also cannot be packed with an item of the other type of the third part.

This concludes the description of the input construction.  Next, we turn our attention to proving the resulting lower bound on the asymptotic competitive ratio for the case $d=3$.

\paragraph{Proving the resulting lower bound.}
Here, we prove the following result.
\begin{theorem}\label{thm_d=3}
There is no online algorithm for the case $d=3$ whose asymptotic competitive ratio is smaller than $\frac {9}{4}$.
\end{theorem}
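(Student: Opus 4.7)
The plan is to analyze the online algorithm's state after each of the three parts of the construction and show that regardless of its strategy, one of the two choices for the third part forces the online/offline ratio arbitrarily close to $9/4$. Let $X$ denote the number of bins opened during the first part, which by the adaptive construction equals the number of large items. Since every first-part item has first component $1/K$ and there are $2KN$ items, we immediately get $X \geq 2N$. A key preliminary fact, which I would establish first by a component-sum check, is that no second-part item can be added to a bin containing a large first-part item, because the large item contributes more than $\gamma = 10\mu$ to components $2$ and $3$ while every second-part item contributes $1-4\mu$ to one of them. Hence every second-part item is packed into a newly opened bin.

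I would then classify these new bins by their second-part contents into three types with counts $p_1, p_2, q$: bins with only $(0,1-4\mu,\mu)$, only $(0,\mu,1-4\mu)$, and one of each (two of the same type fail in some component). Since each second-part type contributes $N$ items, $p_1 + q = p_2 + q = N$, so $p_1 = p_2 = N - q$ and the total number of new bins is $Y = 2N - q$. In the third-part analysis I would verify by component-sum checks that for $I_{21}$, each $(0,1-\mu,1-\mu)$ cannot combine with any large first-part item (component $2$ would exceed $1-\mu+10\mu$), with any second-part item (the appropriate component would exceed $1-\mu+1-4\mu$), or with another $(0,1-\mu,1-\mu)$ (sum $2-2\mu$); so the algorithm opens $N$ new bins for online cost $X + Y + N = X + 3N - q$. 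For $I_{22}$, I would show $(0,3\mu,1-2\mu)$ can only be added to a bin containing a lone $(0,1-4\mu,\mu)$ (of which there are $p_1 = N - q$) and symmetrically for $(0,1-2\mu,3\mu)$, forcing $2q$ additional bins for online cost $X + Y + 2q = X + 2N + q$.

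For the offline upper bound, in each case I would exhibit an explicit packing consisting of $\lceil X/K \rceil$ bins of $K$ large items plus $2N$ combined bins absorbing the second- and third-part items with the $2KN - X$ small first-part items as padding. For $I_{21}$, $N$ combined bins hold a lone $(0,1-\mu,1-\mu)$ and $N$ hold a second-part pair; for $I_{22}$, all $2N$ combined bins hold matched second-third pairs $(0,1-4\mu,\mu)+(0,3\mu,1-2\mu)$ or their mirror. Each such bin can safely absorb up to $K$ small items because the sum of $K$ small values is below $\mu$, keeping every component strictly below $1$. Both offline costs are therefore $\leq X/K + 2N + O(1)$.

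Combining the two sides, the adversary picks whichever third-part option is worse, giving online cost $\max(X+3N-q,\, X+2N+q) \geq X + \tfrac{5}{2}N$, with equality at $q = N/2$. Dividing by the offline bound and using $X \geq 2N$, the ratio is at least $(X + 5N/2)/(X/K + 2N + O(1))$, which is increasing in $X$ once $K$ is large, so the algorithm's optimal choice is $X = 2N$, yielding ratio $\to 9/4$ as $N,K \to \infty$. The main obstacle I anticipate is the exhaustive component-sum case analysis of item compatibilities: every feasibility/infeasibility is pinned by a small-margin inequality of the form $(1-c_1\mu) + (1-c_2\mu) > 1$, and even a single overlooked valid pairing would let the algorithm save bins and depress the ratio below $9/4$.
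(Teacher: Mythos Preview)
Your proposal is correct and follows essentially the same approach as the paper: the same three-part adaptive construction, the same case split on the third part, and the same feasibility checks. Your $q$ is the paper's $Z_2$ and your $p_1+p_2$ is the paper's $Z_1$; your two online costs $X+3N-q$ and $X+2N+q$ coincide with the paper's $X+Z_1+Z_2+N$ and $X+2N+Z_2$. The only cosmetic difference is the endgame: the paper first assumes $X\le 6N$ (else $I_0$ alone gives ratio $\ge 3$), bounds the offline by $2N+7$, and sums the two inequalities, whereas you keep the $X/K$ term in the offline bound, take the max, and observe the resulting ratio is increasing in $X$ so it is minimized at $X=2N$; both routes yield $9/4$.
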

\begin{proof} We can assume that $X \leq 6N$, since the case where there is an
infinite number of values of $N$ for which $X \geq 6N$ implies
that the asymptotic competitive ratio of the algorithm is at least
$3$.
Recall that $OPT(I_0 )=2N$ (by Lemma \ref{lem10}).
Thus, the number of bins for the offline packing
of $I_{21}$ and the offline packing of $I_{22}$ do not
exceed $\frac{6N}{K}+1+2N$.
By the offline packing of $I_1$ and the packing of the third part.
By requiring $K \geq N$, the number of
bins is at most $2N+7$.

As for the algorithm, all bins created by the algorithm for the
first part of the input have large items, whose values are above
$\gamma=10\mu$. Thus, new bins are created for the second part and
third part. Any bin can contain at most two items of the second
part, and at most one item of the third part.

Let $Z_1$ and $Z_2$
be the numbers of bins with one and two items, respectively,
opened by the algorithm for the second part (when we count the number of items of such bins at the end of the second part). By the numbers of
items in the relevant part of the input
we have $$Z_1+2\cdot Z_2 = 2N . $$
We recall that the online algorithm cannot pack any items into bins opened during the first phase, because these bins contain
a large item of the first phase which has a second and third component at least $10\mu$ while packing a second part item
it requires at most $4\mu$ loading in these coordinates.

For $I_{21}$, every item of
the third part requires a new bin, so the cost of the algorithm is
$X+Z_1+Z_2+N$.
For $I_{22}$, it
is possible to use bins of the second part only if they have a single second part item, so
the cost of the algorithm is
at least  $X+2N+Z_2$.

Since this construction can be used for an infinite
number of values of $N$, we can use the definition of the asymptotic competitive ratio as the $\limsup$ of the absolute competitive
ratio when the optimal cost is at least $N$, and let $N$ grow to infinity. Thus, letting $R$ be the
competitive ratio, we have $$X+2N+Z_2 \leq R \cdot (2N+7)$$ and
$$X+Z_1+Z_2+N \leq R \cdot (2N+7) .$$
Taking the sum of these
inequalities gives $2X+Z_1+2\cdot Z_2 + 3N \leq 2R(2N+7)$ and by
$Z_1+2\cdot Z_2 = 2N$ and $X\geq 2N$,we have $R \geq \frac{9N}{2(2N+7)}$, which
implies a lower bound of $2.25$ on the asymptotic competitive ratio.
\end{proof}

A very slight improvement over the lower bound which we proved above in Theorem \ref{thm_d=3} can be obtained as follows, similarly to
the known construction for $d=2$ \cite{BBDEL_CCBP}. An alternative second
part of the
input will contain items whose first component is not zero but
some multiple of $\frac{1}K$.
The second component will be
slightly larger than $\frac 13$, where there will be large items
and small items (small items are those that are packed by the
algorithm into a bin that cannot receive another item), and all these items have second components larger than $\frac 13$.
There may
be a third part of the input
(in this alternative input), similarly to the construction of
\cite{BBDEL_CCBP}. We omit the details as the idea is similar and the
improvement is very small.
We note that this value of $2.25$ is a valid lower bound for the cases $d\geq 4$ as well.
For the case $d=5$ we could get the same lower bound by another method in Section \ref{sec:medd} as well, where we proved significantly larger values for larger dimensions.

\section{The case $\boldsymbol{d=8}$\label{sec:d=8}}
We consider this special case as well, in order to demonstrate
that the asymptotic competitive ratio grows relatively fast with
the dimension.  We picked the value of $d=8$ as for this dimension we are able to exhibit new properties of instances leading to improved lower bounds.  Once again the lower bound construction consists of three parts.

\paragraph{The first part of the input.} The first part of the construction is identical to that of the
case $d=3$,
including the property that the values of $K$ and $\eps$ are the same,
with the only change that the components equal to
$a_i$ are not just the second and third components, but all
components with indexes
$2, 3, \ldots, 7$ are equal to $a_i$. The first
component is still $\frac 1K$, while the $8$th component is equal
to zero. The values $\gamma$ and $\mu$ are defined as in the first part of the construction for the case $d=3$.

\paragraph{The second part of the input.} The second part of the input consists of $6N$ items consisting of $6$ groups each of which has $N$ vectors, where every
$N$ vectors of a common group are identical. All these vectors have $8$th components
equal to $\frac 13$ and first components equal to zero. The other
components are equal to either $0$ or to $1-3\mu$, where every
item has
exactly
one component equal to $1-3\mu$, and we will call it the
large component of the item.  We will have the items of group $j$ (for $j=1,2,\ldots ,6$) having component $j+1$ equal to $1-3\mu$ while all other components (excluding the $8$th component) are zero.

\paragraph{Analyzing the packing of the algorithm at the end of the second part.}  Before describing the third part of the input, we introduce some notation and properties of the packing of the algorithm at the end of the second part of the input.  The  items of the second part are defined so that no bin can have more than three such items by the constraint on the $8$th component, and all
(at most three) items of one bin have distinct large components
since $1-3\mu > \frac 12$.
We will distinguish the cases where a bin contains three, two, or just one item of the second part of the input, introducing notation for their corresponding bin numbers.

For
$j_1,j_2,j_3 \in \{2,3,4,5,6,7\}$, where $j_1<j_2<j_3$, let
$X_{j_1,j_2,j_3}$ be the number of bins with
 (exactly)
three items of the
second part of the input, whose large components are $j_1$, $j_2$,
and $j_3$. There are $20$ such variables. For $j_4,j_5 \in
\{2,3,4,5,6,7\}$, where $j_4<j_5$, let $Y_{j_4,j_5}$ be the number
of bins with
(exactly)
two items of the second part of the input, whose
large components are $j_4$ and $j_5$. There are $15$ such
variables. For $j_6 \in \{2,3,4,5,6,7\}$, let $Z_{j_6}$ be the
number of bins with
(exactly)
one item of the second part of the input,
whose large component is $j_6$. There are six such variables.
Since every bin opened by the algorithm for the first part of the
input has a sum of components of items above $\gamma=10\mu$ in components
$2,3,\ldots,7$, all these bins of the algorithm are new.

The
number of bins opened by the algorithm for the first part of the
input is denoted by $Q$ and it satisfies $$Q\geq 2N \  .$$ The sum of
variables of the form $X_{j_1,j_2,j_3}$ is denoted by $X$, the sum
of variables of the form $Y_{j_4,j_5}$ is denoted by $Y$, and the
sum of variables of the form $Z_{j_6}$ is denoted by $Z$.  That is, $X=\sum_{j_1,j_2,j_3} X_{j_1,j_2,j_3}$, $Y=\sum_{j_4,j_5} Y_{j_4,j_5}$, and $Z=\sum_{j_6} Z_{j_6}$. By counting the number of items of the second part, we have
$$3X+2Y+Z = 6N \  . $$

\paragraph{The third part of the input.}
The third part has one of ten possible sets of items, of similar
structures. These items have six non-zero components, which are
components $2, 3, \ldots,7$. Every item has three components whose
values are $2\mu$, and three components whose values are $1-\mu$.
No two such items can be packed into the same bin since the sum of such a component for two items is either $2\mu+1-\mu>1$ or $2(1-\mu)>1$.
For a triple
$\{2,j_7,j_8\}$ where $j_7,j_8 \in \{3,4,5,6,7\}$
are fixed component indexes, and
$j_7<j_8$, the input consists of $N$ items whose components $2,
j_7, j_8$ are equal to $1-\mu$ and the components
$\{2,3,4,5,6,7\}\setminus \{2,j_7,j_8\}$ are equal to $2\mu$, and
$N$ items whose components $2, j_7, j_8$ are equal to $2\mu$ and
the components $\{2,3,4,5,6,7\}\setminus \{2,j_7,j_8\}$ are equal to
$1-\mu$.  This completes the construction of the input.  Next, we prove the resulting lower bound.

\paragraph{Proving the resulting lower bound.}  Recall the decision variables $X,Y,Z,Q$ whose values are determined by the algorithm but they satisfies the conditions $Q\geq 2N$ and $3X+2Y+Z = 6N$ established above.  We first upper bound the optimal offline cost after the third part of the input and then present a lower bound on the maximum cost of the algorithm on these $10$ inputs that can be constructed in the third part of the input.
As in the proof for
$d=3$ we can assume $\frac{Q}{K}\leq 6$.

\begin{lemma}\label{lem12}
For each of the ten inputs that can be created at the end of the third part, there is an offline solution whose cost is at most $2N+7$.
\end{lemma}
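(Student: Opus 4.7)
The plan is to construct an offline packing that uses $2N$ \emph{main} bins and at most $7$ \emph{auxiliary} bins. Each main bin will carry three items of the second part (with mutually distinct large components chosen from a prescribed triple), one item of the third part, and up to $K$ small items of the first part; the auxiliary bins will hold the $Q$ large items of the first part, $K$ per bin. As in the proof for $d=3$, we may assume $Q\leq 6N$, and choosing $K\geq N$ gives $\lceil Q/K\rceil\leq 7$, so the auxiliary bins contribute at most $7$ to the total.

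The crux is a matching between second-part and third-part items that depends on the fixed pair $\{j_7,j_8\}$. Set $T_A=\{2,j_7,j_8\}$ and $T_B=\{3,4,5,6,7\}\setminus\{j_7,j_8\}$, so $|T_A|=|T_B|=3$ and $T_A\cup T_B=\{2,\dots,7\}$ is a partition. Partition the six groups of second-part items according to whether their large coordinate lies in $T_A$ or in $T_B$. Designate $N$ main bins as \emph{type A} and $N$ as \emph{type B}: each type-A main bin receives one item from each of the three $T_A$-groups together with one type-A third-part item, and each type-B main bin is built symmetrically from the $T_B$-groups and the type-B third-part items. This consumes all $6N$ second-part items and all $2N$ third-part items across the $2N$ main bins.

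Feasibility is checked coordinate by coordinate in a main bin. Coordinate $1$ gets at most $K\cdot(1/K)=1$ from the small first-part items, and coordinate $8$ gets $3\cdot(1/3)=1$ from the three second-part items (the third-part item contributes $0$ in both). For any $c\in\{2,\dots,7\}$ inside a type-A main bin: if $c\in T_A$, then exactly one second-part item contributes $1-3\mu$ and the third-part item contributes $2\mu$, totalling $1-\mu$; if $c\in T_B$, no second-part item contributes and the third-part item contributes $1-\mu$; the type-B case is symmetric. In every $c\in\{2,\dots,7\}$ the second- and third-part items therefore sum to exactly $1-\mu$, leaving room $\mu$, which is precisely what up to $K$ small first-part items (each below $\gamma/(10K)$ in that coordinate) contribute. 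Since $2KN-Q$ small items can be distributed at most $K$ per bin across the $2N$ main bins, the packing is feasible, and the total count $2N+\lceil Q/K\rceil\leq 2N+7$ follows. The main obstacle is identifying the right matching: the observation that drives it is that the type-A and type-B third-part items are \emph{complementary} in their $\{1-\mu,2\mu\}$-pattern with respect to the partition $T_A\cup T_B$, and this complementarity is exactly what makes a bin with three second-part items (large coordinates all inside $T_A$ or all inside $T_B$) paired with a single third-part item level off at $1-\mu$ in every one of the six middle coordinates for each of the ten choices of $\{j_7,j_8\}$.
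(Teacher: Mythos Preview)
Your proof is correct and follows essentially the same approach as the paper's: pack the $Q$ large first-part items $K$ per bin in at most $\lceil Q/K\rceil\leq 7$ auxiliary bins, and use $2N$ main bins each carrying up to $K$ small first-part items, three second-part items whose large coordinates form one side of the partition $\{2,j_7,j_8\}\cup(\{3,\dots,7\}\setminus\{j_7,j_8\})$, together with the complementary third-part item. Your coordinate-by-coordinate feasibility check is more explicit than the paper's (which simply asserts feasibility ``because of the item sizes of the input parts''), but the packing is identical. One small naming quirk: what you call the ``type-A third-part item'' is the one with $2\mu$ in the $T_A$ coordinates, which in the paper's ordering is the second type listed; this is harmless but worth stating up front to avoid momentary confusion.
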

\begin{proof}
Fix one particular input at the end of the third part, that is, we are given the pair $j_7,j_8$ used by the adversary for this fixed input.
An offline solution packs the items as follows. There are
 $\lceil \frac{Q}K \rceil$ bins with $K$
large items of the first part each, and $2N$ bins, where each such
bin has at most $K$ small items of the first part. Out of the last $2N$
bins, there are $N$ bins with three items of the second part whose
large components are $2, j_7, j_8$ and one item of the third part
for which these components are equal to $2\mu$, and $N$ bins with
three items of the second part whose large components are in the
set $\{2,3,4,5,6,7\}\setminus \{2,j_7,j_8\}$, and one item of the
third part for which these components are equal to $2\mu$.
These bins are feasible because of the item sizes of the input parts.
Thus, the
optimal cost never exceeds $\frac{Q}K+1+2N \leq 2N+7$.
\end{proof}

The algorithm can combine
into a common bin
some items of the third part with items
of the second part but it cannot use bins that were used for items of the first part for packing items of the second or third parts. We describe only bins without any items of the first part because any bin of the algorithm containing an item of the first part cannot receive any
additional items. We discuss such bins with two or three items of
the second part (since bins with just one item can always receive additional items).
For a set $\{2,j_7,j_8\}$, there may be bins with two items of the second
part, where one of the items has a large component in the set
$\{2,j_7,j_8\}$ and the other one has a large component in the set
$\{2,3,4,5,6,7\}\setminus \{2,j_7,j_8\}$.
In addition,
there may be  bins with three items of the second part,
where the set of
large components is none of the sets $\{2,j_7,j_8\}$ and
$\{2,3,4,5,6,7\}\setminus \{2,j_7,j_8\}$
(comparing them as sets and not as ordered tuples).

By Lemma \ref{lem12}, for large values of $N$ we find for the asymptotic competitive ratio $R$ that
$$Q+X+Y+Z \leq 2RN \ .$$

We introduce two new variables $Y',X'$ where $Y'$ is the
number of bins of the algorithm with two items of the second part
that cannot receive an item of the third part, and $X'$ is the
number of bins of the algorithm with three items of the second
part that cannot receive an item of the third part (for the choice
of third part, that is, we fix the third part temporarily).
Then, by considering this input using the fact that the third part of the input requires packing items into
at least $2N$ bins and we cannot use $Q+Y'+X'$ of the bins which were opened for the first or second parts, we conclude that $$Q+Y'+X'+2N \leq 2RN \ .$$
We take the sum of the last inequality for all
ten options of $j_7$, $j_8$, where the right hand side is $20RN$,
and the multiplier of $N$ on the left hand side is $20N$.
Since we consider all options for the third part of the input, the values $X'$ and $Y'$ can have different values, and more precisely, each one has up to ten different values.

We count
the multiplier of each variable as follows. Variables of the form
$X_{j_1,j_2,j_3}$ are included in all variables $X'$ except for
the option where $j_1,j_2,j_3$ are components of equal values,
(the algorithm chose exactly the same subset as the one chosen for the third part of the input)
which is just one case of the third part. Thus, the multiplier of
$X_{j_1,j_2,j_3}$  is $9$.  Variables of the form $Y_{j_4,j_5}$
are included in all variables $Y'$ except for cases where $j_4$
and $j_5$ are components of equal values, which is the case if
$\{j_4,j_5\}\subseteq \{2,j_7,j_8\}$ or  $\{j_4,j_5\}\subseteq
\{2,3,4,5,6,7\}\setminus \{2,j_7,j_8\}$. Out of the $15$ variables,
there are nine such options that are included (in the sense that the bins cannot be used for items of the third part of the input) and six that are not
included. Thus, every variable is included in six of the ten partitions, and in this sum of constraints every variable $Y_{j_4,j_5}$ has a
multiplier of $6$. We get
\begin{equation} \label{d=8eq}  10Q+6Y+9X+20N \leq
20RN \ . \end{equation}
Using $3X+2Y+Z = 6N$ and $Q+X+Y+Z \leq 2RN$, we find by
subtraction that $2X+Y -Q \geq 6N-2RN$ or alternatively $$9X+4.5 Y
-4.5 Q \geq 27N-9RN \ . $$ By subtracting the last inequality from
\eqref{d=8eq}, we have $1.5Y+14.5Q + 47 N \leq 29RN$.
Since $Y \geq 0$ and $Q \geq 2N$ hold, we establish that $76 N \leq 29 R N$
and therefore $R \geq \frac{76}{29} \approx 2.620689655$ as we summarize in the following theorem.

\begin{theorem}
There is no online algorithm for the case $d=8$ whose asymptotic competitive ratio is smaller than  $\frac{76}{29} \approx 2.620689655$.
\end{theorem}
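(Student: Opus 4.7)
The plan is to execute the three-part adaptive lower bound construction whose skeleton already appears in the preceding paragraphs, and then to extract the bound $R \geq \frac{76}{29}$ by averaging the online-cost inequalities over the ten possible third-part continuations. I would begin by fixing a large integer $K$ and small $\eps$, and running the adaptive-values machinery from Section~\ref{sec:medd} with the binary condition ``item is placed in an empty bin'' to generate a sequence of $2KN$ first-part vectors whose first component is $\tfrac{1}{K}$, whose components $2,\dots,7$ equal the adaptively generated value $a_i$, and whose $8$th component is $0$. As in Lemma~\ref{lem10}, the offline optimum for this prefix is exactly $2N$, while the online algorithm opens at least $Q \geq 2N$ bins, each containing a ``large'' item whose components $2$--$7$ exceed $\gamma = 10\mu$. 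I would also record the standard reduction $Q/K \leq 6$ (otherwise we already obtain a ratio of $3$).

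Next I would introduce the second part: six groups of $N$ identical items, each with $8$th component $\tfrac13$, first component $0$, and a single ``large'' component among $\{2,\dots,7\}$ of value $1-3\mu$. Because $1-3\mu > \tfrac12$ and $\tfrac13$ fills the $8$th coordinate three times exactly, every online bin contains at most three items from this part, all with distinct large-coordinate labels. This justifies introducing the variables $X_{j_1 j_2 j_3}$, $Y_{j_4 j_5}$, $Z_{j_6}$ (counting bins holding $3$, $2$, $1$ second-part items respectively), and the identity $3X + 2Y + Z = 6N$. The first-part bins cannot absorb second-part items because their coordinates $2$--$7$ already exceed $\mu$.

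For the third part I would, for each of the $\binom{5}{2} = 10$ unordered pairs $\{j_7, j_8\} \subseteq \{3,\dots,7\}$, present $2N$ items with $6$ nonzero coordinates where three are $1-\mu$ and three are $2\mu$, arranged in the two complementary patterns described in the excerpt. Two third-part items never coexist in a bin, so the online algorithm must open at least $2N$ new bins for them unless they can be merged with second-part bins; a second-part bin accepts one third-part item precisely when the labels of its second-part items match (as a set) either $\{2,j_7,j_8\}$ or its complement. The offline bound is the content of Lemma~\ref{lem12}: pair each of the $2N$ small-item bins of the first part with a triple of second-part items and one third-part item, producing cost $\leq 2N+7$. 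Hence for the chosen $\{j_7,j_8\}$ the online cost satisfies $Q + Y' + X' + 2N \leq 2RN$, where $Y'$ and $X'$ count only second-part bins that cannot absorb a third-part item for this particular choice.

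The crux, and the step I expect to be the main obstacle, is the combinatorial accounting when summing the ten inequalities. For each fixed $\{j_7,j_8\}$, a variable $X_{j_1 j_2 j_3}$ is ``usable'' only when $\{j_1,j_2,j_3\}$ equals $\{2,j_7,j_8\}$ or its complement, so across the $10$ choices it is ``unusable'' in exactly $9$ of them; symmetrically, each $Y_{j_4 j_5}$ is unusable in exactly $6$ of the $10$ choices. Summing yields $10Q + 9X + 6Y + 20N \leq 20RN$. Combining this with $3X+2Y+Z = 6N$ and $Q + X + Y + Z \leq 2RN$ (obtained from the end-of-second-part cost bound), I would eliminate $X$, $Y$, $Z$ by an appropriate nonnegative linear combination: multiplying the second inequality by $4.5$ and subtracting produces $1.5Y + 14.5Q + 47N \leq 29RN$, and then using $Y \geq 0$ and $Q \geq 2N$ gives $76N \leq 29RN$, whence $R \geq \tfrac{76}{29}$. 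The delicate part is verifying the exact multiplicities ($9$ and $6$) of the large- and medium-type bins in the averaged inequality, since a miscount would change the final constant; once that bookkeeping is confirmed the arithmetic combination is forced.
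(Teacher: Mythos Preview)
Your proposal is correct and follows the paper's proof essentially line for line: the same three-part construction, the same variables $Q,X,Y,Z$ with $3X+2Y+Z=6N$ and $Q\ge 2N$, the same averaging over the ten third-part continuations yielding $10Q+9X+6Y+20N\le 20RN$, and the same linear combination with $Q+X+Y+Z\le 2RN$ to arrive at $1.5Y+14.5Q+47N\le 29RN$ and hence $R\ge 76/29$. The only imprecision is your sentence ``a second-part bin accepts one third-part item precisely when the labels of its second-part items match (as a set) either $\{2,j_7,j_8\}$ or its complement''; the correct condition is that the labels are \emph{contained in} one of the two sides (so a single-item bin is always usable), but your subsequent counting of the multiplicities $9$ for $X_{j_1j_2j_3}$ and $6$ for $Y_{j_4j_5}$ is stated correctly and matches the paper.
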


\bibliographystyle{abbrv}

\end{document}